\newcommand{\bhs}{\mathfrak{s}}
\newcommand{\cW}{\mathcal{W}}
\crefname{ex}{Example}{Examples}
\title{Reinterpreting demand estimation}
\author{Jiafeng Chen \\ Department of Economics, Stanford University \\
jiafeng@stanford.edu}
\date{\today. This paper is largely motivated by  discussions and issues that arose during
 a thought-provoking seminar by Steve Berry  at Stanford GSB in March 2025, to whom I am
 grateful. I thank Isaiah Andrews, Dmitry Arkhangelsky, Lanier Benkard, Kirill Borusyak,
 Jos\'e Ignacio Cuesta, Liran Einav, Matt Gentzkow, Jeff Gortmaker, Guido Imbens, Ravi
 Jagadeesan, Ivana Komunjer, Sokbae Lee, Lihua Lei, Ariel Pakes,    Ashesh Rambachan,
 Steve Redding, Brad Ross, Jonathan Roth, Bernard Salani\'e, Jesse Shapiro, Paulo
 Somaini, Shoshana Vasserman, and participants of the 2025 Bay Area IO Fest  for helpful
 discussions. }
\begin{document}

\maketitle

\begin{abstract}
This paper clarifies how and why structural demand models 
\citep{berry2014identification,berry2024nonparametric} predict unit-level counterfactual
 outcomes. We do so by  casting structural assumptions equivalently as restrictions on
 the joint distribution of potential outcomes. Our reformulation highlights a 
 \emph{counterfactual homogeneity} assumption underlying structural demand models: The
  relationship between counterfactual outcomes is assumed to be identical across markets.
  This assumption is strong, but cannot be relaxed without sacrificing identification of
  market-level counterfactuals. Absent this assumption, we can interpret model-based
  predictions as extrapolations from certain causally identified average treatment
  effects. This reinterpretation provides a conceptual bridge between structural modeling
  and causal inference.
\end{abstract}

\pagestyle{plain} 

\newpage

\section{Introduction}

Predicting counterfactual outcomes for individual units is central to many areas of
economics. In industrial organization, for instance, prices set by firms depend on market
shares at counterfactual prices; thus, predictions of these counterfactuals  yield
markups and marginal costs. Structural econometric methods are often motivated by their
ability to predict counterfactual outcomes for individual units. Once researchers fit a
model to observed data, the model implies counterfactual outcomes for all units.  By
contrast, the literature on causal inference
\citep{splawa1990application,rubin1974estimating} generally focuses on recovering \emph
 {average} counterfactuals. For unit-level counterfactuals, causal inference methods are
 typically informal---e.g., extrapolating from average treatment effects (ATEs) among
 observably similar units---if they are produced at all.

These ``two cultures'' for predicting counterfactuals, to quote
\citet{breiman2001statistical}, face parallel critiques. The causal inference literature
shows that certain average counterfactuals are identified through credible treatment
variation. However, these averages often are  not themselves of economic interest.
Extrapolating them to individuals would only be valid under constant treatment effects,
which severely restricts unobserved heterogeneity. In contrast, structural methods impose
modeling assumptions up front to directly target unit-level counterfactuals. But these
predictions seem to hinge on the model: It can be unclear how to interpret them without
the model.\footnote{As examples of these respective critiques in the literature,
\citet{berry2021foundations} write of the treatment effects literature, ``In empirical
settings with endogeneity and multiple unobservables, economists often settle for
estimation of particular weighted average responses (e.g., a local average treatment
effect); but this is a compromise poorly suited to the economic questions that motivate
demand estimation, as these typically require the levels and slopes of demand at specific
points.'' \citet{nevo2010taking} argue that heterogeneity is sufficiently strong for
average effects over past mergers to not be informative: ``As our discussion of merger
analysis illustrates, industrial organization economists seem far more concerned than
labor economists that environmental changes are heterogeneous, so that useful estimates of
average treatment effects in similar situations are not likely to be available.''
\citet{angrist2010credibility} write of structural industrial organization, ``In this
framework, it’s hard to see precisely which features of the data drive the ultimate
results.''}

To reconcile and bridge the two cultures, we ask: First, do structural models avoid
restricting unobserved heterogeneity, or do they too extrapolate from averages? Second,
how should we interpret structural model predictions when the model is only an
approximation?  This paper studies these questions in the context of canonical structural
demand models, in both settings with market-level shares \citep
{berry1995automobile,berry2014identification} and settings with demographic-specific
market shares \citep{berry2024nonparametric}.

In either case, we cast the structural model equivalently as restrictions on the joint
distribution of potential outcomes.  This equivalent reformulation---in the spirit of
\citet{vytlacil2002independence,vytlacil2006ordered}---does not simply declare that the
potential outcomes are generated from the corresponding structural model. Rather, the
model is reinterpreted as restricting the joint distribution of potential
outcomes.\footnote{\citet {vytlacil2002independence} shows that the selection model $D_i
 =
\one (\alpha + \beta z > \xi_i)$, for $\beta>0$, is equivalent to  the monotonicity
 restriction $\P (D_i(1) \ge D_i(0)) = 1$ \citep{imbensangrist}. The former is stated as
 a generative structural model of an endogenous treatment $D$, whereas the latter is
 stated as a restriction on the joint distribution of $ (D_i(1), D_i(0))$. }

We find that a key restriction that structural demand models impose is what we term \emph
{counterfactual homogeneity}: If the structural model holds, then two counterfactual
outcomes are deterministically related to each other through a function that is identical
across markets. Concretely, let $Y_i (a)$ denote the counterfactual market shares of a
given market $i$ under a bundle $a$ of product characteristics and prices. Counterfactual
homogeneity restricts that $\var(Y_i(a) \mid Y_i(a')) = 0$ for any bundle $a' \neq a$,
where the variance is taken over draws of markets over its population. This is a strong
restriction on the joint distribution of  $(Y_i(a), Y_i(a'))$. In this sense, these
structural
demand models \emph{do} restrict unobserved heterogeneity.

This restriction is not a flaw of these particular structural models---if we want models
that point-identify unit-level counterfactuals. Restricting unobserved heterogeneity is
necessary for identifying unit-level counterfactuals. Thus, counterfactual homogeneity
cannot be relaxed, unless we give up point-identification as well.\footnote{Of course,
point-identification is convenient but not necessary to make effective use of data.
Partial identification strategies \citep {molinari2020microeconometrics} are popular in
the literature on entry games \citep {ciliberto2009market} and revealed preference
\citep{pakes2015moment}. It is also possible to partially relax counterfactual homogeneity
by demanding that only certain counterfactuals---e.g., counterfactual in prices---are
identified \citep{andrews2023structural,borusyak2025estimating,newpaper}. } Additional
functional form assumptions in \citet{berry2014identification,berry2024nonparametric},
which are sufficient but not necessary, ensure that the homogeneous relationship linking
counterfactuals is uniquely recovered by instrument variation. Modulo these additional
functional forms, nonparametric structural demand models are indeed minimally restrictive
for point-identified unit counterfactuals.

Nevertheless, just as we are uneasy with homogeneous treatment effects when extrapolating
from ATEs, counterfactual homogeneity should also give us pause. Counterfactual
homogeneity meaningfully restricts how markets may be different from each other.\footnote
{This differs from within-market consumer heterogeneity allowed by BLP \citep
{berry1995automobile}. In our notation, consumer heterogeneity corresponds to whether $a
\mapsto Y_i(a)$ is a flexible function. In contrast, counterfactual homogeneity are
 restrictions on how different the demand curves $Y_i(\cdot)$ and $Y_j(\cdot)$ for two
 markets can be.} It rules out, for instance, settings in which each market aggregates a
 population of consumers with heterogeneous preferences, but different markets have
 unobservably different populations of consumers.  It also imposes that markets with the
 same observed conditions necessarily have identical counterfactuals everywhere---ruling
 out demand surfaces that intersect nontrivially.

These implications of counterfactual homogeneity are demanding. This reflects
that the structural models are simplifications and are unlikely to hold literally. Thus,
unit-level counterfactuals under counterfactual homogeneity are better interpreted as
extrapolated predictions rather than as point-identified treatment effects \citep
{kline2019heckits}. We formalize how \citet{berry2014identification} extrapolate from
average effects. We also show that this kind of extrapolation is essentially what large
classes of structural models do. This exercise clarifies the value of structural models.
Many counterfactual predictions are effectively extrapolating from certain ATEs---acting
as if every unit has the same treatment effect. Structural models are additionally helpful
in motivating \emph{which} ATEs to extrapolate from.

This paper contributes to a literature that bridges causal inference and structural
modeling 
\citep{andrews2023structural,kline2019heckits,borusyak2025estimating,kong2024nonparametric,humphries2025conviction,torgovitsky2019nonparametric,mogstad2024instrumental,angrist2000interpretation,conlon2021empirical}.
 This paper is also related to transformation 
 models 
 and other simultaneous equation
 models 
 \citep
  {chiappori2015nonparametric,vuong2017counterfactual,benkard2006nonparametric,matzkin2008identification}.
  Counterfactual homogeneity is related to a literature on omitted parameter
  heterogeneity 
 \citep{chesher1984testing,hahn2014neglected,qian2025testing}.
 Like
 \citet
  {berry2014identification,berry2024nonparametric,vytlacil2002independence,kline2019heckits},
  this paper's primary focus is conceptual---the identification and expressivity of
  workhorse models.\footnote{The models estimated in practice are typically versions of 
  \citet{berry2014identification,berry2024nonparametric} with additional parametric
  assumptions. \citet{compiani2018nonparametric} studies nonparametric estimation for
  \citet{berry2014identification}. }

  This paper proceeds as follows. \Cref{sec:market_level_data} derives equivalent
  assumptions to \citet{berry2014identification}. \Cref{sec:discussion} discusses
  counterfactual homogeneity, derives its necessity, and derives an equivalence between
  structural model predictions and extrapolations from average treatment effects. \Cref
  {sec:demographics_specific_market_shares} derives equivalent assumptions to 
  \citet{berry2024nonparametric} and examines the extent to which models with micro-data
   allow for counterfactual heterogeneity.

\section{Market-level data}
\label{sec:market_level_data}

We start with a standard model of differentiated products \citep
{berry1995automobile,berry2014identification} in potential outcomes notation. Markets are
i.i.d. draws from a population $F^*$, following \citet
{freyberger2015asymptotic}. Each market contains the same $J \in \N$ inside options. The
observed data in each market take the form $(Y, A, Z)$. Here $Y \in
\mathcal Y \subset [0,1]^J$ is the vector of observed market shares,
$A \in
\mathcal A\subset \R^{J \times d_a}$ is the bundle of prices and characteristics
associated with each of the $J$
 goods, and $Z \in \mathcal Z \subset \R^{d_z}$ is a vector of external instruments that
 includes the exogenous entries of $A$. For concreteness, we may write $A =
 (A_1,\ldots,A_J)$ for each product, for $A_j = (P_j, X_j)$ the prices and
 characteristics of a product. We view $A$ as a treatment acting on $Y$. 

 To embed the setup in the potential outcomes framework, let the random variable $Y
 (a)$ denote the potential outcome for a given market, were the bundle set to some
 counterfactual value $a$. The observed market shares $Y$ are generated from underlying
 potential outcomes, $Y = Y(A)$. We condition on other observed market covariates and
 omit them from notation.

Structural demand models posit that counterfactuals $Y (a)$ are generated
through  \[ Y(a) = \bhs(a, \xi) \numberthis.
\label{eq:structure}
\]
For instance, \citet{berry1995automobile} posit that market shares aggregate
heterogeneous consumers with Gumbel idiosyncratic preferences and
heterogeneous valuations for attributes ($\beta \sim G$): \[
	Y_j(a) = \bhs_j(a, \xi) = \int 
    \frac{e^{a_j'\beta + \xi_j}}{1 + \sum_{k=1}^J e^{a_k'\beta + \xi_k}} d G(\beta) 
    \text{ for some distribution $G$.}
\]
Here, the map $\bhs$ is indexed by the random coefficient distribution $G$.

The causal inference and structural demand literatures differ in their typical workflow.
The former usually focuses on average effects like $\E[Y(a_1) - Y(a_0)], \frac{d}{da} \E
[Y(a)]$, or conditional-on-covariates versions thereof
\citep{angrist2000interpretation}. If $a_1$ represents a price increase in good $j$
 relative to $a_0$, these parameters measure the average response of market shares to
 this price increase across some (sub)population of markets. These averages are in turn
 identified through various comparisons that exploit variation in $A$ induced by the
 instruments. Care is taken on restricting how $A$ responds to instruments
 (e.g., monotonicity, \citet{imbensangrist}) to ensure that instrument-level comparisons
 recover proper comparisons over endogenous treatments. 

On the other hand,  the structural demand literature is concerned with \emph
{unit-level} counterfactuals and views average effects as insufficient for scientific and
policy objectives. These unit-level counterfactuals are $Y(a_1) - Y(a_0)$, representing a
particular market's response to changes in $a$. Typically, models impose restrictions
on \eqref{eq:structure}, such that the structural error $\xi =
\bhs^{-1}(A, Y)$ can be recovered from observed variables with knowledge of $\bhs$, which
is itself identified through instrument variation.\footnote {In the case of
\citet{berry1995automobile}, the random coefficient distribution $G$ is identified under
additional parametric assumptions, implying that $\bhs$ is.} The model then identifies
counterfactual outcomes through $Y(a) =
\bhs (a, \bhs^ {-1} (A, Y))$. Identification of $\bhs$ requires assumptions on instrument
 strength, but need no monotonicity-type restrictions on the selection of $A$.

A standard intuition in causal inference  is that unit-level counterfactuals---or even the
distribution of individual treatment effects---are not identified even with a randomized
experiment, absent assumptions like rank invariance
\citep{doksum1974empirical,heckman1997making}.\footnote{This is even termed the 
``fundamental
 problem of causal inference'' \citep {holland1986statistics}.} Consequently, predictions
 of unit-level counterfactuals are rare and  often informal in causal inference. For
 instance, a unit's treatment effect may be approximated by the conditional average
 treatment effect (CATE) among observably similar units, under implicit assumptions
 ruling out unobserved heterogeneity.%

This lack of focus on individual counterfactuals---as well as concerns about unobserved
heterogeneity---in part explains limited takeup of standard causal inference tools and
language in subfields that rely on structural demand models. On the other hand, the
complexity of structural models makes it difficult to see how its predictions depend on
modeling assumptions. It is thus useful to understand what drives identification of
unit-level counterfactuals. We do so by interpreting structural models as explicit
restrictions on the joint distribution of $Y(\cdot)$.

We now set up notation to discuss identification formally and to introduce the assumptions
in \citet{berry2014identification}. We let $F \in \mathcal P$ denote the distribution of
the observed variables, and we let $F^* \in \mathcal P^*$ denote the distribution of $
(\br{Y (\cdot): a
\in
\mathcal A}, A, Z)$. Each $F^*$ generates a particular $F$ through $Y = Y(A)$, and thus
 $\mathcal P^*$ generates $\mathcal P$. Let $\mathcal S \subset \mathcal Y \times
\mathcal A$ denote the support of $(Y(A), A)$.\footnote{For simplicity, we assume throughout that
 all
 members of $\mathcal P$ have common support: $\P_{F}((Y, A, Z) \in E) = 0 \iff \P_{F'}(
 (Y, A, Z)
\in E) = 0$ for all $F, F'
\in \mathcal P$ and all events $E \subset \mathcal Y \times
\mathcal A \times \mathcal Z$.}

We define identification for unit-level counterfactuals: A unit-level counterfactual
$Y(a)$ is identified if we can compute it from any other $(Y(a'), a')$, with a function
$m(\cdot; F)$ that is known given the observed distribution $F$.
\begin{defn}
\label{defn:id} We say that a counterfactual $Y(a)$ is identified\footnote{This notion is
 slightly stronger than what may be natural. We require the function $m$ to link any two
 potential outcomes. An alternative definition could just require that $m$ link the
 observed outcome $ (Y, A)$ to counterfactual outcomes. When $A$ is randomly assigned,
 these two notions are identical.} at $F$ if for all $F^* \in
    \mathcal P^*$ that generates $F$, there is some function $m(a, \cdot, \cdot; F) :
    \mathcal S \to \mathcal Y$
    such that \[
        \P_{F^*}\br{Y(a) = m(a, Y(a'), a'; F)} = 1
    \]
    for all $(Y(a'), a') \in \mathcal S$. We say that all counterfactuals are identified
    under $\mathcal P^*$ if, for all $a \in \mathcal A$, $Y(a)$ is identified at all $F
    \in
    \mathcal P$.
\end{defn}
\noindent If counterfactuals are identified, then the function $m (\cdot; F)$ can be
obtained from $F$. Any counterfactual for any market can then be computed by substituting
the observed $(Y,A)$ into this function, $Y (a) = m (a, Y,A; F)$. Under
\eqref{eq:structure}, if we identify the function $\bhs$ and can compute $\xi$ from any $
 (Y(a'), a')$ with the knowledge of $F$, then we can identify counterfactuals $Y(a)$ by
 applying $Y(a) = \bhs (a,
\xi(Y(a'), a'))$.

The seminal paper by \citet{berry2014identification} shows identification in this sense
for a flexible class of structural demand models. Their result nests parametric demand
models like logit, nested logit, or BLP \citep{berry1995automobile}. To introduce their
result, we partition characteristics and prices of option $j$ into $a_j =(x_{1j}, p_j, x_
{2j})$. We write $a = (x_1, p, x_2)$. Here, $x_ {1j}
\in \R$ is a special scalar characteristic,\footnote{To nest BLP in this framework, $x_1$
can
be chosen to be any characteristic that does not have a random coefficient
\citep{berry2014identification}.} $p_j$ is price, and $x_ {2j}$ collects other
characteristics.
 In their identification argument, prices $p$ and characteristics $x_2$ do not play
 distinct roles. Let $\mathcal X$ denote the space in which $p, x_2$ take values.

{\renewcommand{\theas}{BH14-1}
\begin{as}[Linear index]
\label{as:bh_linear_index}
For some random variable $\xi \in \Xi \subset \R^J$ and some map $\bhs = \bhs_{F^*}$, the
potential outcomes $F^*$ satisfy \[\P_{F^*}\br{Y (a) =  \bhs(x_1 + \xi, p, x_2)} = 1
\quad \text{for all $a =
(x_1, p, x_2) \in \mathcal A$}.\]
\end{as}

\renewcommand{\theas}{BH14-2}
\begin{as}[Invertible demand]
\label{as:bh_invertible}

    The function $\bhs(\cdot, p, x_2)$ is invertible in its first argument: There exists
    some measurable function $\bhs^ {-1}: \mathcal Y \times \mathcal X \to \R^J$ where \[\P_{F^*}\br{x_1 + \xi = \bhs^{-1}(Y(a), p,
    x_2)} = 1 \quad \text{ for
    all $a = (x_1, p,x_2) \in \mathcal A$}. \]
    \end{as}
}

\Cref{as:bh_linear_index} is stated as Assumption 5.1 in \citet{berry2021foundations}. It
is an implication of Assumption 1 in \citet{berry2014identification}, which is a similar
index restriction on an underlying random utility model. \Cref{as:bh_invertible} is a
conclusion of Lemma 1 in \citet{berry2014identification}, justified via a ``connected
substitutes'' condition in \citet{berry2013connected}. Since the identification of demand
only relies on this implication, we impose it as a high-level assumption instead.

Combined with assumptions on instruments, \cref{as:bh_invertible,as:bh_linear_index} allow
for  identification of the function $\bhs$ by exploiting an
``index-inversion-instruments'' recipe \citep{berry2021foundations}, which returns the
following moment condition:
\[\E[\xi \mid  Z] = \E[\bhs^{-1} (Y, P, X_2)\mid  Z] -  X_1  = 0.\] The function
 $\bhs^{-1}$ is then identified through nonparametric instrumental variables
 \citep{newey2003instrumental}; see Theorem 1 in \citet{berry2014identification}. Upon
  identification of $\bhs$, the structural shock $\xi =
 \bhs^ {-1}(Y,P, X_2) - X_1$ can be computed and unit-level counterfactuals are recovered.
  The map $m$ in \cref{defn:id} can be chosen as \[Y (a) = \bhs (x_1 +
 \underbrace{\bhs^{-1}(Y (a'), p', x_2') - x_1'}_{\text{model-implied $\xi$}}, p, x_2)
 \quad a= (x_1, p, x_2),
 a'= (x_1',p', x_2'),
\] which depends on the data only through the identified structural function $\bhs$.

This identification argument is mathematically simple. It shows that parametric
restrictions in BLP, for instance, are not crucial for identification. Nevertheless, it
can be somewhat mysterious how the index and invertibility assumptions allow for
identification of $\bhs$, and what distributions over $Y(a)$ they rule out. Our central
exercise is to restate
\cref{as:bh_invertible,as:bh_linear_index} equivalently only in terms of counterfactuals
$Y(\cdot)$, without presuming a generative model of $Y (\cdot)$. This restatement
precisely clarifies the restrictions on counterfactuals made by the generative model.

\subsection{Equivalent assumptions in potential outcomes}

 Our first assumption imposes that $Y(\cdot)$ satisfy \emph{counterfactual
homogeneity}.

{\renewcommand{\theas}{CH}
\begin{as}[Counterfactual homogeneity]
\label{as:latent_homogeneity}
    For each $F^* \in \mathcal P^*$, there exists some mapping $C_{\cdot \to \cdot} = C_
    {\cdot \to \cdot , F^*}$ such that \[\P_{F^*}\br{Y
    (a') = C_{a \to a'}(Y(a))} =1 
    \text{  
    for all $a, a' \in \mathcal A$.} 
    \numberthis \label{eq:conversion}\] 
    Equivalently, for some baseline treatment $a_0
    \in
    \mathcal A$, there exists $C_0(y,a) = C_{a \to a_0}(y)$, invertible in its first
    argument, such
    that for all $a \in \mathcal A$, 
    \[
   \P_{F^*} \br{Y(a_0)  = C_0(Y(a), a)} = 1. \]
\end{as}
}
\Cref{as:latent_homogeneity} states that there is a {deterministic} mapping $C_{a \to a'}$
 that converts one counterfactual $Y(a)$ into another $Y(a')$. This mapping is common to
 \emph{all markets} in the population $F^*$. Equivalently,
 counterfactuals $Y(a')$ have zero conditional
  variance given any other counterfactual outcome $Y(a)$, over draws of markets in $F^*$:
  \[\var_ {F^*} \pr{Y (a') \mid Y
  (a) } = 0_{J\times J} \quad  \text{  for all $a, a' \in \mathcal A$.} 
  \numberthis \label{eq:zero_variance}
 \] Also equivalently, we can first convert all counterfactuals $Y(a)$ into some baseline
  outcome $Y(a_0)$, and then generating counterfactuals $Y(a')$ from $Y(a_0)$. In these
  senses, \cref{as:latent_homogeneity} restricts the heterogeneity across markets by
  restricting the intrinsic dimension of the support of potential outcomes $\br{Y(a)}_
  {a \in \mathcal A}$. The
  \emph{relationship} between $Y (a)$ and $Y(a_0)$ is kept homogeneous across all markets.
  We refer to it as \emph {counterfactual homogeneity} for this reason.

  An implication of counterfactual homogeneity is that all markets that have identical
  conditions in the data $(Y, A) = (y,a)$ must then also have identical counterfactual
  outcomes $Y(a') = C_{a \to a'}(y, a)$, for all counterfactual characteristics and
  prices $a' \in
\mathcal A$: Geometrically, if two markets have crossing demand curves $a \mapsto Y
 (a)$, then the two demand curves must be identical.  \Cref {as:latent_homogeneity} is
 also a generalization of rank invariance in standard treatment effect settings.\footnote
 {There, rank invariance
\citep{doksum1974empirical} imposes that $Y(0) = C(Y(1))$ for some monotone $C$, and if
both outcomes are continuously distributed, $C$ can be taken to be $F_{Y(0)}^{-1} \circ
F_{Y(1)}$ and invertible, for $F_ {Y(j)}$ the CDF of $Y (j)$.} Relative to rank
invariance, \cref{as:latent_homogeneity} extends to non-binary treatment and
multidimensional outcomes.

Counterfactual homogeneity rules out heterogeneity \emph{across markets}. It is not an a
priori restriction on how a particular market, say a realization $y_i (a) = Y_i(a)$ drawn
from $F^*$, may respond to counterfactual bundles $a
 \mapsto y_i(a)$. Thus, to the extent that we think of $y_i(a)$ as aggregations of
  consumers within market $i$, \cref{as:latent_homogeneity} generates flexible
  substitution patterns for any given market. What \cref{as:latent_homogeneity} does
  restrict is how consumer populations can be different across markets.

 \begin{exsq}[An economic model that violates counterfactual homogeneity] 
\label{ex:model}
 Suppose each market aggregates BLP-style preferences: \[
     Y_i(a; \xi_i, \zeta_i) = \int \frac{e^{a_j'\beta+\xi_{ij}}}{1+\sum_{k=1}^J e^
     {a_k'\beta +
     \xi_{ik}}} dG
     (\beta; \zeta_i). 
 \] However, instead of assuming that the consumer taste distributions $G
  (\cdot; \zeta_i)$ are identical across markets, perhaps certain markets $(\zeta_i = 1)$
  are more price sensitive than others $(\zeta_i = 0)$. The type of the market $\zeta_i$
  is either unobserved or insufficiently proxied by observables. Then $\zeta_i$ cannot be
  recovered from the observed data and thus unit-level counterfactuals are not
  identified, even with randomized $A$. An example with $J=1$ is shown in 
  \cref{fig:demandcurves}.
 \end{exsq}

\begin{figure}[tb]
    \centering
    \includegraphics[width=0.7\textwidth]{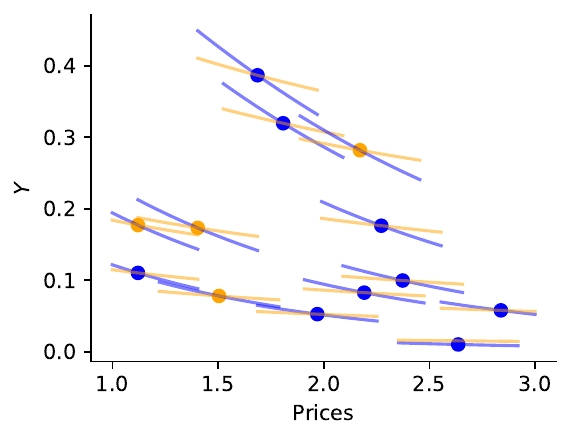}

    \begin{proof}[Notes]
        All market shares follow random coefficient logit $Y(p) = \int \Lambda
    \pr{-\alpha p + \xi} \,d G (\alpha)$, for $\Lambda(t) = 1/(1+e^{-t})$ and randomly
     assigned prices. Markets are randomly blue or orange, corresponding to $\zeta_i$ in
     \cref{ex:model}. The {\color{blue}blue} markets
     have {$\color{blue} G_ {\text{blue}}
    \sim
    \mathrm{Lognormal}(0, 0.5^2)$}. The {\color{orange}orange} markets have {$ \color
     {orange}G_ {
    \text{orange}} \sim \mathrm{Lognormal}(-0.5, 2^2)$}. For each market realization $
     (P, Y)$, we plot its own counterfactual shares at nearby price values (for {\color
     {blue}blue} markets, this is the {\color{blue}blue} curve). We also compute the
     $\xi$ value for a hypothetical market of opposite color such that its hypothetical
     demand curve passes through $(P, Y)$ (for {\color{blue}blue} markets, this is the
     {\color{orange}orange} curve). Because demand curves cross in this setting, this
     setup
     violates counterfactual homogeneity. When the colors of the markets are not
     observed, the population distribution of $ (P, Y)$ cannot perfectly distinguish
     whether a particular market is blue or orange. Since different colors imply
     different counterfactuals---including price elasticities, the counterfactuals are
     not identified. \end{proof}
    \caption{A parametric demand model with $J=1$ where counterfactual homogeneity fails
     to hold}
    \label{fig:demandcurves}
\end{figure}

The second assumption imposes some functional form restriction on the map $C_0$.
{\renewcommand{\theas}{PL}
\begin{as}[Latent partial linearity]
\label{as:latent_partial_linearity} 
For all $F^* \in \mathcal P^*$, there exists a function $h = h_{F^*}:
    \mathcal Y \times \mathcal X \to \R^J$ where, for all $a = (x_1, p, x_2) \in
    \mathcal A$, invertible in its first argument, such that \[
        \phi^{-1}\pr{C_0(y, a)} = h(y, p, x_2) - x_1,
        \numberthis \label{eq:linearity}
    \]
    for $\phi^{-1}(y) = h(y, p_0, x_{20}) - x_{10}.$
\end{as}}

\Cref{as:latent_partial_linearity}  states that, up to some invertible transformation
 $\phi$, $C_0$ is partially linear in $x_1$. This functional form restriction is
 important for identification using instrumental variables. It is also substantive,
 imposing, e.g., that $x_1$ is excluded from elasticities: the Jacobian of $Y(a)$ with
 respect to $a$ depends on $x_1$ only through $Y (a)$:
\[
    \diff{Y(a)}{x_1} = \pr{\diff{h(Y(a), p, x_2)}{y}}^{-1} \quad \diff{Y(a)}{(p, x_2)} = - \diff{Y(a)}
    {x_1} \diff{h(Y(a), p, x_2)}{(p, x_2)}. \numberthis \label{eq:derivatives}
\]

\Cref{as:latent_homogeneity,as:latent_partial_linearity} can be  combined as
 the following homogeneity assumption on some transformation of potential outcomes.

{\renewcommand{\theas}{HOM}
\begin{as}[Homogeneous effects in a transformed outcome]
\label{as:homogeneity}
    There exists some function $H(y, p, x_2) = H_{F^*}(y,p,x_2)$, invertible in $y$, such
    that the transformed potential outcome $H(a)$, for $H(a) \equiv H(Y(a), p, x_2)$,
    satisfies: \begin{enumerate}
        \item (No treatment effect in $(p, x_2)$) For all $(x_1, p_1, x_{2,1}), 
        (x_1,p_2,x_{2,2})\in \mathcal A$, \[\P_{F^*}\br{H
        (x_1,p_1,x_{2,1}) = H (x_1, p_2, x_{2,2})} = 1\]
        \item (Homogeneous linear effects in $x_1$) For all $(x_{1,1}, p, x_2),
         (x_ {1,2}, p, x_2)
        \in \mathcal A$, \[
            \P_{F^*}\br{H(x_{1,1}, p, x_2) - H(x_{1,2}, p, x_2) = x_{1,1} - x_{1,2}}=1.
        \]
    \end{enumerate}
\end{as}}

\Cref{as:homogeneity} states that for some unknown transformation of the potential outcome
 $H(a) = H(Y(a), p, x_2)$, if we treat $H(a)$ as a new potential outcome, then it admits
 no treatment effects in $(p, x_2)$ and linear treatment effects in $x_1$.\footnote
 {The slope of the $x_1$-treatment effect on $H(a)$ can be normalized through $H$.} 
\Cref{as:homogeneity} makes clear how
\cref{as:latent_homogeneity,as:latent_partial_linearity} restrict treatment effect
heterogeneity. Viewed as assumptions on some transformation of potential outcomes,
\cref{as:latent_homogeneity,as:latent_partial_linearity} are exactly constant treatment
 effects assumptions. \Cref{as:homogeneity} is weaker than standard constant treatment
 effects by not specifying which transformed outcome satisfies homogeneity---only that
 some transformation does.

Our main result is that these assumptions are equivalent to the
\citet{berry2014identification} assumptions, in the same spirit as   
\citet{vytlacil2002independence,vytlacil2006ordered}'s results for instrumental variable
models. The equivalence is easy to derive, once we link $(\bhs, \xi)$ in \cref
 {as:bh_invertible,as:bh_linear_index} to $(h,\phi, C_0)$ in
\cref{as:latent_homogeneity,as:latent_partial_linearity} and $H$ in \cref{as:homogeneity}:
\[
    \bhs = h^{-1}, \quad \xi = \phi^{-1}(Y(a_0)), \quad h(y,p,x_2) = H(y, p, x_2).
\]

\begin{restatable}{theorem}{thmmainequiv}
\label{thm:mainequiv}
The following are equivalent: 
\begin{enumerate}
     \item \cref{as:bh_invertible,as:bh_linear_index},
    \item \cref{as:latent_homogeneity,as:latent_partial_linearity},
    \item \cref{as:homogeneity}.
\end{enumerate}    
\end{restatable}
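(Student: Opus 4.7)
My plan is to prove the triangle of implications (1)$\Rightarrow$(2)$\Rightarrow$(3)$\Rightarrow$(1), using the correspondences $\bhs = h^{-1}$, $\xi = \phi^{-1}(Y(a_0))$, and $h = H$ flagged in the excerpt. Each step is largely definitional once the correct identification is made; the main thing to keep track of is invertibility and that the objects are consistent on the joint support.

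For (1)$\Rightarrow$(2), I start from \cref{as:bh_linear_index,as:bh_invertible}: inverting $\bhs$ gives $\xi = \bhs^{-1}(Y(a),p,x_2)-x_1$, which, since the left side does not depend on $a$, immediately lets me define $C_{a\to a'}(y) = \bhs(x_1' + \bhs^{-1}(y,p,x_2) - x_1, p', x_2')$, establishing \cref{as:latent_homogeneity}. I then set $h:=\bhs^{-1}$ and $\phi^{-1}(y) := h(y,p_0,x_{20}) - x_{10}$; a one-line substitution shows $\phi^{-1}(C_0(y,a)) = h(y,p,x_2) - x_1$, giving \cref{as:latent_partial_linearity}. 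Invertibility of $\phi$ follows from invertibility of $\bhs$ in its first argument.

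For (2)$\Rightarrow$(3), I define $H(y,p,x_2) := h(y,p,x_2)$, which is invertible in $y$ by \cref{as:latent_partial_linearity}. By \cref{as:latent_homogeneity}, any two bundles $a_1,a_2$ that share the same $x_1$ satisfy $Y(a_0) = C_0(Y(a_1),a_1) = C_0(Y(a_2),a_2)$ almost surely, so applying $\phi^{-1}$ and using \eqref{eq:linearity} gives $H(Y(a_1),p_1,x_{2,1}) - x_1 = H(Y(a_2),p_2,x_{2,2}) - x_1$; this is item (1) of \cref{as:homogeneity}. For item (2), the same calculation with $a_1=(x_{1,1},p,x_2)$ and $a_2=(x_{1,2},p,x_2)$ produces $H(Y(a_1),p,x_2) - x_{1,1} = H(Y(a_2),p,x_2) - x_{1,2}$, which rearranges to the stated linearity.

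For (3)$\Rightarrow$(1), I exploit \cref{as:homogeneity} to observe that the quantity $\xi := H(Y(a),p,x_2) - x_1$ is, with probability one, the same random variable for every $a\in\mathcal{A}$: item (1) removes dependence on $(p,x_2)$, and item (2) removes dependence on $x_1$. Setting $\bhs := H^{-1}$ in its first argument, I invert the display $H(Y(a),p,x_2) = x_1 + \xi$ to get $Y(a) = \bhs(x_1+\xi,p,x_2)$ almost surely, yielding \cref{as:bh_linear_index} and, directly from invertibility of $H$ in $y$, \cref{as:bh_invertible}.

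The one technical subtlety—rather than a real obstacle—is to make sure the maps $C_{a\to a'}$, $h$, $H$, and $\bhs^{-1}$ are evaluated only on the support $\mathcal{S}$ so that invertibility arguments apply, and to state the identities as almost-sure statements under $F^*$ rather than pointwise on $\mathcal{A}$. Given the common-support convention stated earlier in the excerpt, this is routine, and no further regularity beyond what \cref{as:bh_invertible,as:latent_partial_linearity,as:homogeneity} already provide is needed.
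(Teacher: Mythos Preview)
Your proposal is correct and follows essentially the same approach as the paper: the paper proves $(1)\Longleftrightarrow(2)$ and $(2)\Longleftrightarrow(3)$ rather than your cycle $(1)\Rightarrow(2)\Rightarrow(3)\Rightarrow(1)$, but the substantive identifications ($\bhs=h^{-1}$, $\xi=\phi^{-1}(Y(a_0))$, $h=H$) and the one-line substitutions are identical. Your $(3)\Rightarrow(1)$ step, which the paper does not isolate, is slightly more direct than routing through $(2)$, but the content is the same.
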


Reformulating assumptions this way retells the progress in demand models with market share
data. In the standard telling \citep{ackerberg2007econometric}, different generations of
structural demand models (e.g., vertical models, simple logit, nested logit, BLP, \citet
{berry2014identification}) all maintain random utility models of consumer behavior and
treat market shares as aggregations of consumer choices. They differ in the flexibility
of the utility model and of implied substitution patterns. In this retelling, all such
demand models instead maintain counterfactual homogeneity and latent partial linearity of
market shares. They specify different parametrized classes of $h$, which governs
model-implied substitution patterns. These two perspectives---making the random utility
model increasingly flexible versus enlarging the function class for $h$---meet at the
nonparametric model in \citet {berry2014identification}.

This reformulation also clarifies why nonparametric structural demand models are able to
identify unit-level counterfactuals. It likewise explains why these models avoid
selection assumptions on how $A$ responds to instruments. Unit-level counterfactuals are
identified because of counterfactual homogeneity. Counterfactual homogeneity likewise
means that heterogeneity in the first stage does not matter for how $A$ affects $Y$,
since different types of compliers trace out exactly the same response in $H(a)$.

\section{Discussion}
\label{sec:discussion}

\subsection{The curse of unobserved heterogeneity}

\Cref{thm:mainequiv} clarifies that structural demand models \emph{do} restrict unobserved
 heterogeneity. The need to restrict unobserved heterogeneity is not specific to these
 particular demand models either. Any model that \emph{identifies} unit-level
 counterfactuals necessarily has to impose counterfactual homogeneity: \cref
 {as:latent_homogeneity} is necessary for identification in the sense of \cref
 {defn:id}. 

\begin{restatable}[Necessity of counterfactual homogeneity]{prop}{lemmalatent}
\label{lemma:lemmalatent}
Suppose all counterfactuals are identified under $\mathcal P^*$ in the sense of
\cref{defn:id}, then \cref{as:latent_homogeneity} is satisfied.
\end{restatable}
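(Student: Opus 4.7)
The plan is to construct the counterfactual conversion maps $C_{a \to a'}$ of \cref{as:latent_homogeneity} directly from the identification map $m(\cdot; F)$ supplied by \cref{defn:id}. Fix an arbitrary $F^* \in \mathcal P^*$ and let $F \in \mathcal P$ denote the observed distribution it induces. By hypothesis, every $Y(a)$ is identified at $F$, so for each $a$ there is a function $m(a, \cdot, \cdot; F)$ depending only on $F$ (not further on $F^*$) such that $\P_{F^*}\br{Y(a) = m(a, Y(a'), a'; F)} = 1$ for every $a' \in \mathcal A$.

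I would then simply define $C_{a \to a', F^*}(y) := m(a', y, a; F)$. Reading the identification condition with the roles of $a$ and $a'$ swapped immediately yields $\P_{F^*}\br{Y(a') = C_{a \to a', F^*}(Y(a))} = 1$ for all $a, a' \in \mathcal A$, which is precisely the first formulation of \cref{as:latent_homogeneity}. No additional work is required to produce the deterministic cross-counterfactual relationship.

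For the equivalent invertible formulation, I would fix a baseline $a_0 \in \mathcal A$ and set $C_0(y, a) := C_{a \to a_0, F^*}(y)$ with candidate inverse $C_{a_0 \to a, F^*}$. Applying the identification property in both directions gives the almost-sure identities $C_0(C_{a_0 \to a, F^*}(Y(a_0)), a) = Y(a_0)$ and $C_{a_0 \to a, F^*}(C_0(Y(a), a)) = Y(a)$. These hold on the supports of $Y(a_0)$ and $Y(a)$ respectively; extending both maps arbitrarily off-support to be genuine set-theoretic inverses does not disturb any almost-sure statement and delivers the required invertible $C_0(\cdot, a)$.

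There is no real obstacle: the argument is essentially an unpacking of the stronger version of identification emphasized in the footnote to \cref{defn:id}, which requires $m$ to link any two potential outcomes rather than just the observed $(Y, A)$ to a single counterfactual $Y(a)$. The one cosmetic subtlety is the off-support extension needed to turn the a.s. inverse relation into a global invertibility statement, which the assumption as written accommodates.
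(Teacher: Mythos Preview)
Your proposal is correct and matches the paper's proof essentially verbatim: the paper also sets $C_{a'\to a, F^*} = m(a,\cdot,a';F)$ and reads off \eqref{eq:conversion} directly from the identification condition. The paper does not bother to address the equivalent invertible-$C_0$ formulation at all, so your third paragraph goes slightly beyond what the authors write, but the core argument is identical.
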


No nonparametric model can relax counterfactual homogeneity without giving up
identification. Thus, the difference between the two cultures---structural demand
modeling and causal inference---is when each incurs this curse of unobserved
heterogeneity. Structural demand models incurs it up front, whereas causal inference
approaches implicitly incurs it when extrapolating from average treatment effects. In
either case, the fundamental problem of causal inference remains. 

Given the goal of identifying unit-level counterfactuals, \citet
{berry2014identification} impose little more than what is necessary. The functional form
assumption, \cref {as:latent_partial_linearity}, is strictly speaking not
necessary.\footnote{As a simple example, suppose we instead assumed a different,
multiplicative functional form: \[ Y_j(a_0) = \phi_j\pr { g_j(Y(a), x) \exp(-w_j) }.
\numberthis \label{eq:multiplicative}
\] When $g_j(y, x)$ can take on zero or negative values, this multiplicative formulation
 is different from \cref{as:latent_partial_linearity} because $\log(g_j(Y(a), x) \exp
 (-w_j))$ is undefined. However, we may continue to exploit $\E[g_j(Y, X) \mid W, Z] =
 c_0 \exp(W_j)$ to identify $g_j(\cdot, \cdot)$.} But it is  not relaxable without
 imposing additional assumptions, since many distinct mappings among the potential
 outcomes are observationally equivalent and satisfy counterfactual homogeneity.\footnote
 {This is clear with two treatments $(a_0, a_1)$, the set of observationally equivalent
 $C_0$ corresponds to the set of transport maps between the distributions $F_{Y
 (a_0)}$ and $F_{Y(a_1)}$. One would need some other assumption to rule out all but one
 transport map for identification. } In this sense, the assumptions in \citet
 {berry2014identification} are close to minimal for point-identification.

 Nevertheless, counterfactual homogeneity is likely misspecified: The zero-variance
 implication \eqref{eq:zero_variance} is implausible in many applications.  Economic
 models allowing for markets that differ in terms of their consumer populations,
 like \cref{ex:model}, would violate this assumption. We may have little compelling
 reason to rule out these models---other than that ruling them out makes unit-level
 counterfactuals identified. In parametric models, these restrictions are also testable
 if overidentifying moments are nonlinear in parameters \citep
 {chesher1984testing,hahn2014neglected,qian2025testing}. Omitted heterogeneity may
 explain rejection of overidentification restrictions.
 If researchers do not find counterfactual homogeneity credible, what are their options?

One option is to avoid imposing counterfactual homogeneity altogether---conceding that
point-identification of unit-level counterfactuals is too ambitious.  In some structural
contexts, researchers are willing to settle for partial identification rather than
imposing stronger assumptions \citep
{molinari2020microeconometrics,ciliberto2009market,tebaldi2023nonparametric,kalouptsidi2020partial,pakes2015moment}.\footnote{However,
the identified set for $Y(a)$ for a unit with $(Y,A, Z)$ cannot be smaller than the
conditional support $Y(a) \mid Y,A,Z$ under $F^*$. If counterfactual homogeneity does not
hold, then this conditional support can in principle be large. Thus, partial
identification alone is unlikely to be informative of individual counterfactual
outcomes.} Another alternative is to report a posterior predictive $\pi(Y(a) \mid
(Y,A,Z))$ for $\pi$ a prior on $\mathcal P^*$, where $\mathcal P^*$ allows for
counterfactual heterogeneity. Yet another option is to focus on a smaller set of
unit-level counterfactuals. If one only demands point-identification of
counterfactuals \emph{in prices}, then structural models can be relaxed to allow for
misspecification in characteristics $x_1, x_2$ \citep{andrews2023structural}.   We show
in 
\cref{sub:counterfactuals_in_prices} that such a relaxation exactly corresponds to
 allowing for counterfactual \emph{heterogeneity} in characteristics. Ongoing
 work \citep*{newpaper} additionally shows that price counterfactuals in nonparametric
 versions of these relaxations are identified by recentered instruments \citep
 {borusyak2025estimating}.

A second option treats the model as misspecified and interprets unit-level predictions as
extrapolations \citep*{andrews2025purpose}. The next subsection formalizes an
equivalence---in a context broader than demand---between extrapolation from ATEs and
making unit-level predictions under a structural model that identifies unit-level
counterfactuals. This result then allows us to separate quasi-experimental identification
of average effects from extrapolation in structural models. We can thus interpret
structural models as extrapolating from  ATEs identified through instrument variation,
thus retaining an interpretation when the model does not hold. Structural modeling serves
as an informative prior over \emph{which} ATEs to extrapolate from.

\subsection{Reinterpretation of predicted unit-level counterfactuals}
\label{sub:extrap_equiv} 

Consider a generic context where one observes outcomes, treatments, and instruments $
(Y,A,Z)$, where $Y$ need not be market shares. A common recipe for extrapolating from
ATEs is:
\begin{enumerate}[wide]
    \item Researchers specify a class $\mathcal H$ of extrapolation rules $H(Y,A)$,
    invertible in $Y$. Each function implicitly defines a potential outcome $H(a) = H(Y
    (a), a)$.
    
    \item Researchers posit that some outcome $H(A) = H(Y,A)$ is independent of
     the instrument $Z$, in the sense that certain transforms $m(H(A))$ is mean
     independent of $Z$.\footnote{Mean independence takes $m(\cdot)$ to be the identity.
     Full independence takes $m(\cdot)$ to be all bounded measurable functions. This is
     formalized in \cref{defn:extrapolate_from_ATE}} With some caveats, we may interpret
     this orthogonality as a lack of average treatment effect on the transformed outcome
     $H (a)$.\footnote{When the treatment itself is randomly assigned ($Z=A \indep Y
     (a)$), then $\E[H (A) \mid A] = \E[H(a)] = 0$ means that $a$ has no average treatment
     effects
     on $H(a)$. When only the instrument is randomly assigned, then this condition can be
     interpreted as a lack of treatment effects that are detectable through instrument
     variation.}

    \item When $(Y, A, Z) \sim F_0$, suppose the data $F_0$ identifies a unique member $H_
     {F_0}
    \in \mathcal H$ through the orthogonality restriction in (2). Researchers then
     extrapolate from the knowledge that $H_{F_0}(a)$ has no ATEs---by making a leap of
     faith that $H_{F_0}(a)$ also has no individaul treatment effects. This results in
     predictions of the form $\tilde Y (a; Y,A) = H_{F_0}^{-1}(H_{F_0}(Y, A), a)$. 
\end{enumerate}
We formalize this in \cref{defn:extrapolate_from_ATE} and call such predictions $\tilde Y$
\emph{extrapolated from averages} with respect to extrapolation rules $\mathcal H$, since
 they fundamentally extrapolate a lack of average effects to a lack of individual
 effects.

 This recipe rationalizes many informal extrapolation rules. For instance, a researcher
 who extrapolates by estimating the average treatment effect in some transformation $f
 (Y)$ (e.g. $\log Y$) implicitly takes $\mathcal H$ to be demeaned outcomes:  \[
     \mathcal H = \br{H(y,a) = f(y) - \mu(a) : \mu(\cdot)}. \numberthis 
     \label{eq:ATE_class}
 \]
 Independence with instruments pins down the average
 structural function $\mu (a) =
 \E[Y(a)]$.\footnote{The uniqueness holds, for instance, under completeness 
 \citep{newey2003instrumental}.} Predictions under this model act as if individual
 treatment effects are equal to differences in $\mu(\cdot)$: \[
     \tilde Y(a) = f^{-1}(f(Y) + \underbrace{\mu(a) - \mu(A) }_{\text{ATE in $f(Y)$}}),
     \qquad \mu (a) =
 \E[Y(a)].
 \] Predictions from quantile treatment effects similarly extrapolate by choosing
  $\mathcal H = \br{H(y,a) \in [0,1] : H(\cdot, a) \text{ is strictly
  increasing}}$ \citep{chernozhukov2005iv}.

Through this lens, \citet{berry2014identification} choose partially linear
extrapolation rules $
          \mathcal H = \br{H(y,x_1, p, x_2) = h(y, p, x_2) - x_1 : h(\cdot)} . $ We may
           thus interpret \citet{berry2014identification} extrapolating from ATEs through
           $\mathcal H$ as well. Compared to extrapolating using rules
 \eqref{eq:ATE_class}, these rules essentially trade flexibility with respect to the
  average structural function $\mu (a)=\mu (x_1, p, x_1)$ for flexibility with respect to
  $h(y, p, x_2)$.

This dual interpretation for structural models holds more broadly: Extrapolation from
averages implicitly specify structural models that identify unit-level counterfactuals,
and structural models that identify unit counterfactuals implicitly specify extrapolation
rules.

Indeed, we could instead extrapolate by positing a structural model $\mathcal P^*$ that
rationalizes the data---in which $Y = \bhs (A,
\xi)$ and unit-level counterfactuals are identified in the sense of \cref
 {defn:id}. By \cref{lemma:lemmalatent}, the model $\mathcal P^*$ must satisfy
 counterfactual homogeneity. We can thus view a member $F^* \in \mathcal P^*$ as indexed
 by a joint distribution $ (Y(a_0), A, Z) \sim Q
 \in \mathcal Q$ and a mapping $C_0(y,a) \in \mathcal C$, since any $Y(a)$ is obtained by
 $C_0^{-1}(Y(a_0), a)$. We can likewise view a structural model as specifying a class of
 $(Q, C_0) \subset \mathcal Q \times \mathcal C$ pairs.

 The following result shows that imposing such a model generates predictions equivalent to
 extrapolation using some extrapolation rules $\mathcal H$. That is, any prediction that
 extrapolates from averages can be equivalently cast under a (possibly misspecified)
 structural model. Conversely, any structural model $\mathcal P^*$ can be thought of as
 choosing extrapolation rules---with the technical caveat that $\mathcal P^*$ allows for
 combining $C_0$ with arbitrary distributions $(Y(a_0), A, Z)$ satisfying instrument
 exogeneity, which we formalize in  \cref{defn:complete}. 

\begin{restatable}{prop}{propequivextrapolate}
\label{prop:equivalence_extrapolation} Fix a class of distributions $\mathcal P$ over
 observables $(Y,A,Z)$. Extrapolation from averages  and structural models are equivalent
 in the following sense: For any $F \in \mathcal P$, let $(Y, A, Z) \sim F$ and let
 $\tilde Y_F(a; Y,A)$ be a prediction of the counterfactual $Y(a)$ for some observed unit
 $
 (Y,A)$. 

 \begin{enumerate}[wide]
     \item If $\tilde Y_F(a; Y,A)$ is extrapolated from averages with respect to  $\mathcal H$ in the sense of
      \cref{defn:extrapolate_from_ATE}, then there exists some $\mathcal P^*$ that
       identifies unit-level counterfactuals, generates $\mathcal P$, and rationalizes
       $\tilde Y$ as identified unit-level counterfactuals.

 \item Conversely, if the predictions $\tilde Y_F(a; Y,A)$ arise from some structural
  model $\mathcal P^*$ that identifies unit-level counterfactuals, rationalizes $\mathcal
  P$, and is \emph{only restricted by exogeneity and $\mathcal C$} in the sense of \cref
  {defn:complete}, then there exists some  $\mathcal H$ that rationalizes $\tilde Y$ as
  extrapolated averages in the sense of \cref{defn:extrapolate_from_ATE}.
 \end{enumerate}
\end{restatable}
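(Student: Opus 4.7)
The plan is to give explicit constructions in both directions and verify the required properties.

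For the first direction, suppose $\tilde Y_F(a; Y, A) = H_F^{-1}(H_F(Y, A), a)$ arises from extrapolation with respect to $\mathcal H$ and its associated transforms, where $H_F \in \mathcal H$ is the unique element identified at $F$. I would construct $\mathcal P^*$ by specifying potential outcomes through the generative rule $Y(a) \equiv H_F^{-1}(H_F(Y, A), a)$, taking the joint distribution of $(\{Y(a)\}_{a \in \mathcal A}, A, Z)$ to be the pushforward induced by $(Y, A, Z) \sim F$. This immediately yields counterfactual homogeneity with conversion map $C_{a \to a'}(y) = H_F^{-1}(H_F(y, a), a')$, so \cref{as:latent_homogeneity} holds. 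Identification in the sense of \cref{defn:id} follows because $H_F$ is pinned down by $F$, with identification map $m(a, y, a'; F) = H_F^{-1}(H_F(y, a'), a)$. The induced observable distribution is $F$ by construction, so $\mathcal P^*$ generates $\mathcal P$ and the structural predictions coincide with $\tilde Y_F$.

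For the second direction, I would use \cref{lemma:lemmalatent} to parameterize each $F^* \in \mathcal P^*$ as a pair $(Q, C_0)$, where $Q$ is the joint law of $(Y(a_0), A, Z)$ and $C_0 \in \mathcal C$ is the baseline conversion map. The natural extrapolation class is $\mathcal H = \{H(y, a) = C_0(y, a) : C_0 \in \mathcal C\}$, so that $H(A) = Y(a_0)$. The exogeneity restriction on $Y(a_0)$ with respect to $Z$ then plays the role of the orthogonality condition in \cref{defn:extrapolate_from_ATE}. Under the ``only restricted by exogeneity and $\mathcal C$'' hypothesis formalized in \cref{defn:complete}, identification of unit-level counterfactuals forces $F$ to single out a unique $C_0 = H_F \in \mathcal H$ through this exogeneity restriction alone. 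The structural prediction then matches the extrapolation: $Y(a) = C_0^{-1}(C_0(Y, A), a) = H_F^{-1}(H_F(Y, A), a) = \tilde Y_F(a; Y, A)$.

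The hard part will be the second direction, where one must argue that exogeneity alone---without any extra modeling structure on $Q$---already pins down a unique element of $\mathcal H$ from the observed $F$. This is precisely the content of the completeness-style condition formalized in \cref{defn:complete}. Without such a condition, a structural model could point-identify unit-level counterfactuals through restrictions beyond exogeneity and $\mathcal C$ (for instance, restrictions on the distribution of $Y(a_0)$ or on its dependence with $A$), and no class $\mathcal H$ of extrapolation rules together with orthogonality conditions on $Z$ could replicate the predictions. With the condition in place, however, the equivalence follows by a direct pullback between the pair $(\mathcal H, \text{exogeneity})$ and the pair $(\mathcal C, \text{exogeneity})$, since the mapping $C_0 \mapsto H(y, a) = C_0(y, a)$ is a bijection that preserves both the identification constraint and the functional form of the predictions.
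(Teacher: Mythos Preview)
Your constructions match the paper's exactly: in Part 1 you push forward $(Y,A,Z)\sim F$ through $Y(a)=H_F^{-1}(H_F(Y,A),a)$ to build $\mathcal P^*$, and in Part 2 you set $\mathcal H=\mathcal C$ with the exogeneity of $Y(a_0)=C_0(Y,A)$ against $Z$ playing the role of the orthogonality condition (the paper makes this explicit by choosing the full-independence test-function class \eqref{eq:indep_M}). So the overall approach is the same.

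The one place where your sketch is thinner than the paper is the uniqueness step in Part 2. You assert that ``identification of unit-level counterfactuals forces $F$ to single out a unique $C_0$,'' but this is the heart of the argument and deserves to be spelled out. The paper's mechanism is: take any two $C_0,\tilde C_0\in\mathcal C$ with $C_0(Y,A)\indep_F Z$ and $\tilde C_0(Y,A)\indep_F Z$; \cref{defn:complete} then guarantees that the pairs $(Q,C_0)$ and $(\tilde Q,\tilde C_0)$, with $Q,\tilde Q$ the induced laws of $(C_0(Y,A),A,Z)$ and $(\tilde C_0(Y,A),A,Z)$, each index some $F^*,\tilde F^*\in\mathcal P^*$. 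Both generate the same observed $F$ by construction, so identification in the sense of \cref{defn:id} forces $C_0^{-1}(C_0(Y,A),a)=m(a,Y,A;F)=\tilde C_0^{-1}(\tilde C_0(Y,A),a)$ for all $a$, hence $C_0=\tilde C_0$. Your ``direct pullback'' language gestures at this, but without explicitly invoking \cref{defn:complete} to manufacture the two observationally equivalent $F^*$'s, the step that turns identification into uniqueness of $H_F$ is not yet justified.
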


\Cref{prop:equivalence_extrapolation} thus allows us to separate quasi-experimental
 identification from extrapolation in structural models. Models identifying unit-level
 counterfactuals fundamentally extrapolate from averages, and vice versa. The averages
 themselves are identified through standard quasi-experimental research designs and do
 not require restricting the joint distribution of potential outcomes. Tools and language
 from causal inference can also be helpful in assessing the internal validity of these
 average effects. 

 The value of structural models lies in providing economically motivated extrapolation
 rules $\mathcal H$, which improve on intuitively reasonable but ad hoc ones like
\eqref{eq:ATE_class}. These rules are exactly correct under the model, but can be viewed
 as approximately correct when counterfactual homogeneity approximately holds. Separating
 identification from extrapolation in this way thus clarifies what one can credibly learn
 from data and what one needs to believe to extrapolate to economically relevant
 quantities.

 \medskip

So far, we have shown that market-level counterfactuals are only identified under
counterfactual homogeneity when we only observe market-level data. Their prediction
requires extrapolation from average effects over markets in some way. This motivates
considering whether richer data can restore identification of market-level
counterfactuals without strong assumptions. 

As an idealized benchmark, since markets aggregate populations of consumers, market-level
causal effects are also average causal effects for consumers within a given market. Thus,
with exogenous treatment variation 
\emph{within} a given market at the consumer level, counterfactual outcomes for individual
 markets are identified as average treatment effects among consumers. Close to this
 idealized benchmark,
\citet{tebaldi2023nonparametric} assume that prices are exogenously assigned\footnote
 {In \citet{tebaldi2023nonparametric}, prices (insurance premiums) are deterministic
 functions of consumer age and income.  
\citet{tebaldi2023nonparametric} assume that consumers with different ages and incomes do
 not have systematically different latent preferences, given the market that they reside
 in.} for consumers participating in the California healthcare market and partially
 identify counterfactual market shares. 

The additional value of richer data similarly motivates the literature on ``micro BLP''
\citep{berry2024nonparametric,microblp,conlon2025incorporating}, where we observe market
 shares by demographic subgroups within a given market, though these subgroups are
 subjected to the same bundle of products. Do identification results these settings avoid the curse of unobserved heterogeneity? We conclude this paper by
 deriving an analogous equivalence for identification results
 with micro-data 
 \citep{berry2024nonparametric}. We find that identification with micro-data continues to
  impose counterfactual homogeneity. In fact, since these results are primarily motivated
  by relaxing dependence on instruments, they use even stronger forms of homogeneity
  instead. 

\section{Demographics-specific market shares}
\label{sec:demographics_specific_market_shares}

We observe market shares for different demographic subgroups $w \in \cW
\subset \R^J$. $a \in \mathcal A$ continues to denote treatment. Each market's potential
 outcome is a \emph{process} indexed by $w \in \cW$: $Y(a)[\cdot]: \cW
 \to [0,1]^J$. In this notation, $Y(a) [w]$ denotes market shares among
  demographics $w$ in a randomly drawn market, when prices and characteristics are
  counterfactually set to some value $a$. Analogous to \cref{defn:id}, we are interested
  in identifying the profile of market shares for a given market, at counterfactual
  values of treatment: $Y(a)[\cdot]$ for some $a\neq A$. It is useful to think of $w$ as
  analogous to a time index in panel settings. Consistent with that analogy, we use
  square brackets for $w$ to emphasize that comparisons in $w$ are not causal comparisons
  that represent counterfactual assignment of $w$.  

\citet{berry2024nonparametric} consider a structural model in which \[
     Y(a) [w] = \bhs(w, a, \xi)
\] for some function $\bhs$ and market demand shock $\xi$, under the following assumptions.\footnote{Relative to
 Assumption 1 in \citet{berry2024nonparametric}, \cref{as:index-micro} normalizes the
 index directly, following their Section 2.5. Relative to their setting, we suppressed
 other market-level interventions (their $X_t$) that may enter $\gamma$. Doing so makes
 the normalization in their Section 2.3 unnecessary, which we impose in \cref
 {as:index-micro} directly. } These assumptions nest parametric versions like \citet
 {microblp} (see 
\cref{ex:micro-blp}).

{\renewcommand{\theas}{BH24-1}
 \begin{as}[Index]
 \label{as:index-micro} $\bhs(w, a, \xi) = \sigma(\gamma(w, \xi), a)$, where $\gamma$ has
  codomain $\R^J$, and for all $j$, $
        \gamma_j(w,\xi) = g_j(w) + \xi_j.
    $
    For some fixed $w_0$, $g(w_0) = 0$ and $\frac{dg(w_0)}{dw} = I_J$. 
 \end{as}

\renewcommand{\theas}{BH24-2}
 \begin{as}[Invertible demand]
 \label{as:invertible-micro}
    For all $a \in \mathcal A$, $\sigma(\cdot, a)$ is injective on the support of
    $\gamma(w, \xi)$. 
 \end{as}

\renewcommand{\theas}{BH24-3}
 \begin{as}[Injective index]
 \label{as:invertible-index-micro}
    For all $\xi$ in its support, $\gamma(\cdot, \xi)$ is injective on $\mathcal W$. 
 \end{as}}

\cref{as:index-micro,as:invertible-micro,as:invertible-index-micro} are equivalently
 represented in counterfactual outcomes. The
 first of these equivalent assumptions is analogous to
\cref{as:latent_homogeneity}.

{\renewcommand{\theas}{CH-micro}
\begin{as}[Counterfactual homogeneity of market share profiles]
\label{as:lat_homogeneity_paths}
For some baseline treatment $a_0 \in \mathcal A$, there exists some invertible function
$C_0 (\cdot , a): \mathcal Y \to \mathcal Y$ such that for all $w \in \cW$ and all $a \in
\mathcal A$, \[ Y (x_0) [w] = C_0 (Y (a) [w], a) \quad \text{$P^*$-almost surely}.
    \]
\end{as}}

\cref{as:latent_homogeneity} posits that a deterministic, invertible function maps $Y
 (a)$ to $Y(a_0)$. Analogously, \cref{as:lat_homogeneity_paths} posits that such a
 function maps the \emph{profile} of market shares $Y (a)[\cdot]$ to $Y (a_0)
 [\cdot]$. The mapping in \cref{as:lat_homogeneity_paths} acts identically
 along the profile $w \mapsto Y(a) [w]$ and does not depend on $w$. 

{
\renewcommand{\theas}{PT}
\begin{as}[Latent individual parallel trends]
\label{as:latent-PT}
Fix baseline values $a_0, w_0$. For some invertible mapping $\phi: \mathcal Y \to
\R^J$, the profiles $w\mapsto \phi(Y(a_0)[w])$ are parallel almost surely: There exists
an invertible and differentiable function $g:
\mathcal W \to \R^J$ such that differences in $\phi(Y(a_0)[\cdot])$ are equal to
 differences in $g(w)$
    \[
        \phi(Y(a_0)[w]) - \phi(Y(a_0)[w_0]) = g(w) - g(w_0) \quad \text{ $P^*$-almost
         surely for all $w
        \in \cW$.}
    \] Redefining $\phi(\cdot)$ if necessary, we normalize $g(w_0) = 0$ and $\frac{d}{dw}g
     (w_0) = I_J$.
\end{as}}

\Cref{as:latent-PT} states that, up to some invertible transformation $\phi(\cdot)$, the
 market share profiles at some baseline treatment $w \mapsto Y(a_0)[w]$ are parallel
 almost surely. This is an individual version of the parallel trends assumption, though
 here the ``time index'' is the demographic values $w$. It imposes that trends are not
 only parallel in expectation, but are parallel almost surely.\footnote
 {In difference-in-differences applications, where $w$ is a time index, parallel trends
 is usually stated as \[
    \E[Y(x_0)[w] - Y(x_0)[w_0] \mid A=a] = g(w) 
\]
and does not depend on the realized treatment $a$. This does not require that $Y(a_0)[w] -
Y (a_0)[w_0] = g(w)$ almost surely. 

Similarly, suppose $w$ is a time-index, if potential outcomes are generated through a
two-way fixed effects model $Y_i(a)[w] = \alpha_i +
\beta[w] + f(a) + \epsilon_i[w] $, then the individual-level trends are only parallel to
 $w \mapsto \beta[w] + \epsilon_i[w]$, which depends on the path of idiosyncratic shocks
 $\epsilon_i[\cdot]$. Relative to this, \cref{as:latent-PT} effectively assumes away the
 idiosyncratic shocks $\epsilon_i[w]$. } Thus, in addition to restricting heterogeneity
 in the relationship $a\mapsto Y (a)$, \cref{as:latent-PT} restricts the heterogeneity of
 the relationship $w \mapsto Y(a_0)[w]$, at some fixed $x_0$, across markets.

\Cref{as:lat_homogeneity_paths,as:latent-PT} are further equivalent to the following
assumption by choosing $h(\cdot, a) = \phi(C_0(\cdot, a))$.
{\renewcommand{\theas}{HOM-micro}
\begin{as}[Individual parallel trends in a transformed outcome]
    \label{as:micro-data-combined}
For some fixed $w_0$, there is an invertible function $h(\cdot, x)$ such that for some
invertible and differentiable function $g$, \[
    h(Y(x)[w], x) - h(Y(x_0)[w_0], x_0) = g(w) - g(w_0) \text{ for all $x, w, x_0$},
\]
$P^*$-almost surely. Redefining $h$ if necessary, we normalize $g(w_0) = 0,
\frac{d}{dw}g(w_0) = I_J$.
\end{as}}

Analogous to \cref{as:homogeneity}, \cref{as:micro-data-combined} states that individual
parallel trends hold for transformed outcome profiles $H[w] = H(a)[w] \equiv h(Y(a)
[w], x)$, which do not depend on the treatment $x$. Thus, under \cref
{as:micro-data-combined}, there is some transformed outcome profile $H(a)[\cdot]$ that
receives no treatment effect from $a$ and has parallel sample profiles. 

We collect these equivalences in the following theorem.
\begin{restatable}{theorem}{thmequivmicro}
\label{thm:equivmicro}
    The following are equivalent:
    \begin{enumerate}
        \item \cref{as:index-micro,as:invertible-micro,as:invertible-index-micro}
        \item \cref{as:lat_homogeneity_paths,as:latent-PT}
        \item \cref{as:micro-data-combined}.
    \end{enumerate}
\end{restatable}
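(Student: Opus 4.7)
My plan is to establish the three-way equivalence by chaining $(1)\Rightarrow(2)\Rightarrow(3)\Rightarrow(1)$, following the same template as \cref{thm:mainequiv} but now applied pointwise along the demographic profile $w \mapsto Y(a)[w]$. The central dictionary translating between the three formulations is
\[
\phi(\cdot) = \sigma^{-1}(\cdot, a_0), \quad C_0(\cdot, a) = \sigma\bigl(\sigma^{-1}(\cdot, a), a_0\bigr), \quad h(\cdot, a) = \phi\bigl(C_0(\cdot, a)\bigr), \quad \xi = \phi(Y(a_0)[w_0]),
\]
with $g$ carried over directly between formulations.

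For $(1)\Rightarrow(2)$, I would take $\phi$ and $C_0$ as above—well defined by \cref{as:invertible-micro}. Then by \cref{as:index-micro}, $Y(a)[w] = \sigma(\gamma(w,\xi),a)$, so $C_0(Y(a)[w], a) = \sigma(\gamma(w,\xi), a_0) = Y(a_0)[w]$, which is \cref{as:lat_homogeneity_paths}. Applying $\phi$ gives $\phi(Y(a_0)[w]) = \gamma(w,\xi) = g(w) + \xi$, so $\phi(Y(a_0)[w]) - \phi(Y(a_0)[w_0]) = g(w) - g(w_0)$, with $g$ invertible by \cref{as:invertible-index-micro} and the normalization inherited from \cref{as:index-micro}; this is \cref{as:latent-PT}.

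For $(2)\Rightarrow(3)$, I would set $h(\cdot,a) = \phi(C_0(\cdot,a))$. Applying \cref{as:lat_homogeneity_paths} at $a = a_0$ forces $C_0(\cdot,a_0)$ to act as the identity on the image of $Y(a_0)[\cdot]$, so $h(\cdot,a_0) = \phi(\cdot)$. Then by \cref{as:lat_homogeneity_paths,as:latent-PT},
\[
h(Y(a)[w], a) = \phi(Y(a_0)[w]) = \phi(Y(a_0)[w_0]) + g(w) - g(w_0) = h(Y(a_0)[w_0], a_0) + g(w) - g(w_0),
\]
which is \cref{as:micro-data-combined}, and invertibility of $h(\cdot,a)$ is inherited from $\phi$ and $C_0(\cdot,a)$. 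For $(3)\Rightarrow(1)$, I would recover $\phi(\cdot) = h(\cdot, a_0)$ and $C_0(\cdot, a) = \phi^{-1}(h(\cdot,a))$, which reverses the previous step to yield (2); then define $\sigma(u,a) = C_0^{-1}(\phi^{-1}(u), a)$ and $\xi = \phi(Y(a_0)[w_0])$. Substituting the HOM-micro display gives $Y(a)[w] = \sigma(g(w) + \xi, a)$, which is \cref{as:index-micro} with $\gamma(w,\xi) = g(w) + \xi$. \cref{as:invertible-micro} follows from invertibility of $h(\cdot,a)$, and \cref{as:invertible-index-micro} from invertibility of $g$, both explicit in \cref{as:micro-data-combined}.

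The main obstacle, as in \cref{thm:mainequiv}, is bookkeeping around the normalizations $g(w_0)=0$ and $\frac{d}{dw}g(w_0)=I_J$ and the associated freedom in choosing $\phi$. In each direction, a residual constant (e.g. $h(Y(a_0)[w_0], a_0)$ or $\phi(Y(a_0)[w_0])$) must be absorbed into the definition of $\xi$ or $\phi$ so that $g$ takes its canonical form; the clause ``redefining $\phi$ if necessary'' in \cref{as:latent-PT,as:micro-data-combined} makes this absorption legitimate. Otherwise the argument is purely formal manipulation of invertible maps, and the three invertibility/injectivity hypotheses in (1) line up transparently with invertibility of $C_0(\cdot,a)$, $\phi$, and $g$ in (2)–(3).
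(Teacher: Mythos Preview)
Your proposal is correct and follows essentially the same route as the paper: the chain $(1)\Rightarrow(2)\Rightarrow(3)\Rightarrow(1)$ with the dictionary $\phi=\sigma^{-1}(\cdot,a_0)$, $C_0(\cdot,a)=\sigma(\sigma^{-1}(\cdot,a),a_0)$, $h(\cdot,a)=\phi(C_0(\cdot,a))$, and $\xi=h(Y(a_0)[w_0],a_0)$ is exactly the paper's. The only cosmetic difference is that for $(3)\Rightarrow(1)$ the paper goes directly by setting $\sigma=h^{-1}$ and reading off $Y(a)[w]=h^{-1}(g(w)+\xi,a)$, whereas you detour through $(2)$ and define $\sigma(u,a)=C_0^{-1}(\phi^{-1}(u),a)$; tracing your definitions gives $\sigma=h^{-1}$ anyway, so the two arguments coincide.
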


We conclude this section---and the paper---by explaining the identification argument in
\citet{berry2024nonparametric}, from the perspective of \cref
 {as:micro-data-combined}. This exposition highlights the strength of the homogeneity
 assumptions in delivering identification results. In short, the homogeneity structure
 embedded in \cref{as:micro-data-combined} is already powerful enough to identify $g
 (w)$ and identify $h$ up to level shifts,\footnote{That is, for some fixed baseline
 $y_0$, $h(\cdot, x) - h(y_0, x)$ can be identified.} given the distribution of observed
 data $(Y[\cdot], A) \sim F$---without any restrictions on treatment assignment. Randomly
 assigned instruments then identify the remaining unknown $h(y_0, \cdot)$. 

\begin{figure}[tb]
    \centering
    (a) Distribution of market share \emph{profiles} under some $H(Y)$, rejected by the
    data

    \includegraphics[width=0.8\textwidth]{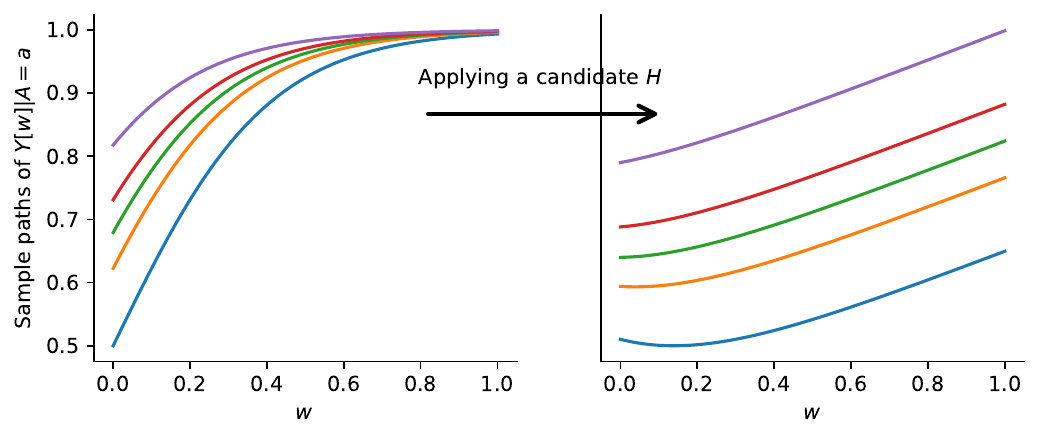}

    (b) Distribution of market share \emph{profiles} under the true $H(Y)$
    
    \includegraphics[width=0.8\textwidth]{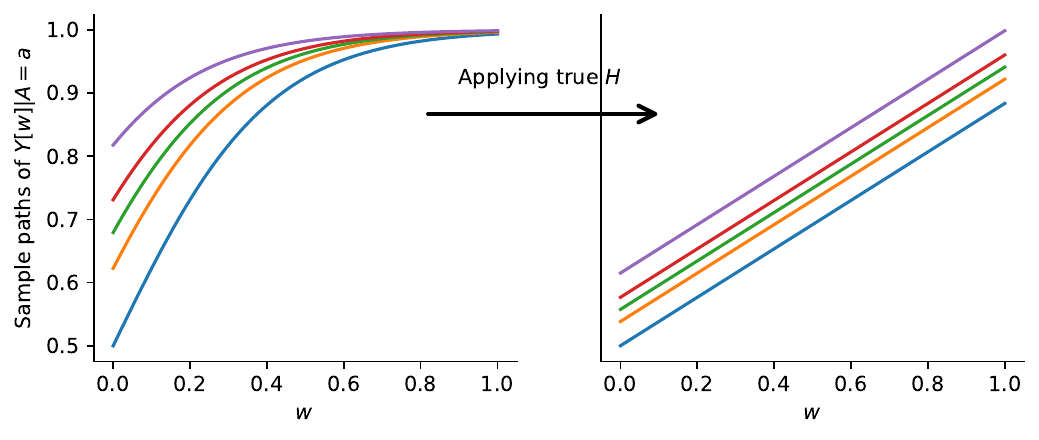}

    \caption{We show sample paths of $Y[\cdot] \mid A=a$ when $J=1$ and $H(Y[\cdot]) \mid
    A=a$ for candidate $H(\cdot)$. A candidate is rejected by the data if the sample
    paths of $H(Y[\cdot])$ are not almost surely parallel.}
    \label{fig:sample_paths}
\end{figure}

To see this, for a given value $a$,  consider the conditional distribution $Y
[\cdot] \mid A=a$. Since $A$ is not randomly assigned, this is the distribution of demand
profiles for markets that select into the product bundle $a$. On this
subpopulation,  \cref{as:micro-data-combined} states that there is some function $h
(\cdot) = h(\cdot, a)$, such that the sample paths $w
\mapsto h(Y[w])$ are almost surely parallel: \[ h \in \br{h: \P_{F}\br{h (Y[w]) - h(Y
[w_0]) =
g(w) \mid A=a} = 1 }.
\] 

Intuitively, this requirement is highly constraining: There should not be many
transformations $h$ that result in parallel profiles. In a setting with $J=1$, 
\cref{fig:sample_paths}(a) illustrates for an arbitrary candidate $H(y)$, the sample paths
 post-transformation are unlikely to be almost surely parallel, leading us to reject this
 candidate. Making the sample paths parallel seems to require getting $H$
exactly right, as in \cref{fig:sample_paths}(b). This rigidity locks in certain features
 of $h (\cdot)$. In fact, under mild smoothness and support restrictions, this rigidity
 identifies $h(\cdot)$ up to a vertical shift and $g(w)$:  Lemma 2, Lemma 3, and
 Corollary 1 in
\citet{berry2024nonparametric} show that $g(w)$ and $h (\cdot, a) - h(y_0, a)$, for some
baseline value $y_0$, are identified.

Instruments eliminate this last indeterminacy in $h(y_0, a)$.
\cref{as:micro-data-combined} implies that, for any fixed $w$, \begin{align*} h(y_0, X) &=
\overbrace{g(w) - (h(Y[w], X) - h(y_0, X))}^{\text{Identified through parallel trends}}
-
h (Y (a_0) [w_0], a_0) \\&\equiv Q (Y, w, X) - h
(Y(a_0)[w_0], a_0)
\end{align*} for an identified function $Q(Y, w, X)$. Given some instrument $Z \indep Y
 (a_0)$, we then have a moment condition that identifies $h(y_0, \cdot)$ under
 completeness \citep{newey2003instrumental}, since $
    \E[h(y_0, X) \mid Z] - \E[Q(Y, w, X) \mid Z]$ is constant in $Z$.

This intuition concurs with that in \citet{berry2024nonparametric} on the value of
micro-data and instruments. They argue that micro-data $w$ provide variation akin to
within-unit comparisons in panel data settings (p.1152). 
\cref{as:latent-PT} additionally highlights that {homogeneity}---in the sense of
\emph{individual} parallel trends---is also important, relative to standard assumptions
in panel settings. 
\Cref{as:latent-PT}, interpreted as a panel assumption, additionally imposes that the unit
 fixed effect is the only heterogeneity across units; absent the fixed effect, all units
 have the same evolution over $w$.

The equivalence \cref{thm:equivmicro}  reveals that in this model, the availability of
micro-data does not relax requirements on counterfactual homogeneity. In fact, additional
homogeneity assumptions---those with respect to $w \mapsto Y(a)[w]$---are imposed to
instead weaken requirements on instruments. Thus, whether identification results
exist---without these cross-market homogeneity assumptions and without within-market
treatment variation---remains a question for future research.

\bibliographystyle{ecca}
\bibliography{main}

\appendix 

\counterwithin{theorem}{section}
\counterwithin{prop}{section}

\counterwithin{defn}{section}
\counterwithin{as}{section}
\counterwithin{ex}{section}

\setcounter{theorem}{0}
\setcounter{defn}{0}

\section{Proofs}

\thmmainequiv*
\begin{proof}
    (2) $\implies$ (1): \Cref{as:latent_homogeneity,as:latent_partial_linearity} implies
    that we
    can write \[
        x_1 + \phi^{-1} (Y(a_0)) = h(Y(a), p, x_2)
    \]
    for all $a = (x_1, p, x_2)$. Define $\xi = \phi^{-1} (Y(a_0))$. Thus we can write \[
        Y(a) = h^{-1}(x_1 + \xi, p, x_2).
    \]
    \Cref{as:bh_linear_index} holds by choosing $\bhs = h^{-1}$. \Cref{as:bh_invertible}
    holds by choosing $\bhs^{-1} = h$. 

    (1) $\Longleftarrow$ (2): We can write \[ Y(a) = \bhs (x_1 + \xi, p, x_2) \iff \xi =
        \bhs^{-1}(Y(a), p ,x_2) - x_1
    \]
    for all $a = (x_1, p, x_2) \in \mathcal A$. Thus, for $a_0 = (x_{10}, p_0, x_{20})$, \[
        \bhs^{-1}(Y(a), p, x_2) - x_1 = \bhs^{-1}(Y(a_0), p_0, x_{20}) - x_{10}.
    \]
    The right-hand side is some fixed invertible function of $Y(a_0)$, which we write as
    $\phi^ {-1}(Y(a_0))$. Therefore \[ Y(a_0) = \phi\pr{
            \bhs^{-1}(Y(a), p, x_2) - x_1
        }.
    \]
    We take $h = \bhs^{-1}$, which is invertible, and $C_0(y, a) = \phi( h(y, p, x_2) -
    x_1 )$.
    Since both $h$ and $\phi$ are invertible, so is $C_0$. This verifies
    \cref{as:latent_homogeneity,as:latent_partial_linearity}.

    (2) $\iff$ (3): Note that \[
        \phi^{-1}(Y(a_0)) + x_1 = h(Y(a),p, x_2)  
    \]
    implies that $h(Y(a), p, x_2)$ is constant in $p, x_2$, assumed to be invertible in
    $Y$, and
    homogeneously linear in $x_1$. On the other hand, given an invertible $H(Y(a),p,
    x_2)$,
    because it is constant in $p, x_2$ and homogeneously linear in $x_1$, we can write \[
        H(Y(a), p, x_2) - x_1 = H(Y(a_0), p_0, x_{20}) - x_{10}.
    \]
    This proves \cref{as:latent_homogeneity,as:latent_partial_linearity} by choosing
    $\phi^{-1}(y) = H(y, p_0, x_{20}) - x_{10}$, assumed to be invertible.
\end{proof}

\lemmalatent*
\begin{proof}
    By definition, for every $F^*$ that generates $F$, there exists some $m = m(\cdot,
    \cdot, \cdot; F)$ such that \[
        \P_{F^*}(Y(a) = m(a, Y(a'), a'; F)) = 1
    \]
    for all $(a, a') \in \mathcal A$. We can thus set $C_{a'\to a} = C_{a' \to a, F^*} =
    m(a, \cdot, a'; F)$.
\end{proof}

\propequivextrapolate*
\begin{proof}
    \begin{enumerate}[wide]
        \item Since $ \tilde Y$ are extrapolated from averages, then \[
            \tilde Y_F(a; Y, A) = H_{F}^{-1}(H_F(Y, A), a). 
        \]
        For a given $F \in \mathcal P$, let $Q_F$ denote the distribution of $(H_F(Y,A),
        A, Z).$ 
        Define a structural model $\mathcal P^*$: \[
         \mathcal P^* \equiv   \br{F^* : F^* \overset{d}{=}\pr{\br{H_F^{-1}(\xi, a)}_{a\in
         \mathcal A}, A,
            Z} \text{ such that } (\xi, A, Z) \sim Q_F, F \in \mathcal P}. 
        \]
        For each $F\in \mathcal P$, its corresponding $F^* \in \mathcal P^*$ rationalizes
        $F$ since $(Y(A), A, Z) \sim F$ for $(Y(\cdot), A, Z) \sim F^*$. Thus $\mathcal
        P^*$ generates $\mathcal P$. 

For any $F^* \in \mathcal P^*$, let $F \in \mathcal P$ be its corresponding observed
distribution. By assumption, consider the sole element of \[
    \mathcal H_I(F) \equiv \br{
        H \in \mathcal H : \E_F[m(H(Y,A)) \mid Z] = 0 \text{ for all $m \in \mathcal M_
        {H,F}$}
    } \numberthis \label{eq:HIFdefine}
\]
and denote it as $H_F$. 
Thus we can write $m(a, Y, a'; F) = H_F^{-1}\pr{H_F(Y(a'), a'), a}$. Thus $\mathcal P^*$
identifies unit-level counterfactuals, and $\tilde Y_F$ are exactly the predictions under
the model.

\item Conversely, suppose $\mathcal P^*$ rationalizes $\mathcal P$ and satisfies
     \cref{defn:complete}. Let $\mathcal C$ be the set of $C_0$ associated with $\mathcal
      P^*$. Define $\mathcal H = \mathcal C$ and $\mathcal M_{H,F}$ as in 
      \eqref{eq:indep_M}. Given any $F \in \mathcal P$, let $C_0, \tilde C_0$ be two
      members of the set \[
          \mathcal H_I(F) \equiv \eqref{eq:HIFdefine} = \br{H \in \mathcal H : H(Y,A)
          \indep_F Z}.
      \]
      (Note that if $F^*$ generates $F$, then $C_0$ corresponding to $F^*$ is a member of
            $\mathcal H_I(F)$, and thus it is nonempty.)
      By \cref{defn:complete}, there are $F^*, \tilde F^* \in \mathcal P^*$, where 
          $F^*$ is the distribution indexed by \[
              Q \sim \pr{C_0(Y,A), A, Z} \text{ and } C_0 \in \mathcal C
          \]
          and $\tilde F^*$ is indexed by \[
              \tilde Q \sim \pr{
                  \tilde C_0(Y,A), A, Z
              } \text{ and } \tilde C_0 \in \mathcal C.
          \] By construction, $F^*$ and $\tilde F^*$ are observationally equivalent, since
           both generate $F$. Since $\mathcal P^*$ identifies unit-level counterfactuals,
           we have that \[
               C_0(Y,A) = Y(a_0) = \tilde C_0(Y,A)
          \text{ and  }
               Y(a) = C_0^{-1}(Y(a_0), a) = \tilde C_0^{-1}(Y(a_0), a).
           \]
           Therefore $C_0 = \tilde C_0$ and $H_I(F) = \br{H_F}$ is a singleton and \[
               Y(a) = \tilde Y_F(a; Y,A) = H_F^{-1}(H_F(Y,A), a). 
           \]
    \end{enumerate}
\end{proof}

\thmequivmicro*

\begin{proof}
    (1) $\implies$ (2):  
Under the assumptions in (1), we can write \begin{align*}
Y(a_0)[w] &= \sigma(\gamma(a, \xi), a_0)\\
& = \sigma(\sigma^{-1}(\sigma(\gamma(w, \xi), a), a), a_0) 
\\
&= \sigma(\sigma^{-1}(Y(a)[w], a), a_0).
\end{align*}
We can then define $C_0(y, a) = \sigma(\sigma^{-1}(y, a), a_0)$. It is invertible because
$\sigma$ is invertible.  This verifies \cref{as:lat_homogeneity_paths}. Now, the
assumptions (1) imply
that \[
    \sigma^{-1}(Y(a_0)[w], a_0) = \gamma(w, \xi) = g(w) + \xi.
\]
Thus if we pick $\phi(y) = \sigma^{-1}(y, x_0)$, then \[
    \phi(Y(a_0)[w]) - \phi(Y(a_0)[w_0]) = g(w). 
\]
This verifies \cref{as:latent-PT}.

(2) $\implies$ (3): This is immediate when we choose $h(y, a) = \phi(C_0(\cdot, a))$.

(3) $\implies$ (1): we can write \[
    Y(a)[w] = h^{-1} \pr{
        g(w) + h(Y(a_0)[w_0], a_0), a
    }. 
\]
Thus if we define $\xi = h(Y(a_0)[w_0], a_0)$, then $Y(a)[w]$ satisfies the structural
model under (1) with the requisite invertibility conditions.
\end{proof}

\section{Extrapolation}

We formalize the extrapolation equivalence in \cref{sub:extrap_equiv}. We consider a class
of distributions $\mathcal P$ on the observed data $(Y,A, Z)$. We first formalize a
general recipe for generating unit-level predictions $\tilde Y(a; Y,A)$ for the
counterfactual at treatment $a$ for a unit with observed data $(Y,A)$. 

\begin{defn} 
\label{defn:extrapolate_from_ATE} 

Let $\mathcal H$ be a class of functions $H(y, a)$, invertible in the first argument.
Upon observing $(Y,A,Z) \sim F$, let
$\tilde Y_F(a; Y,A)$ be a prediction of the counterfactual $Y (a)$ for some unit with
realized outcomes $(Y,A)$. We say that the predictions $\tilde Y_F(a; Y,A)$ are \emph
{extrapolated from averages} with respect to extrapolation rules $\mathcal H$ and test
functions $\br{\mathcal M_{H, F}}_{H\in \mathcal H, F\in \mathcal     P}$ if 
\begin{enumerate}
    \item For every $F \in \mathcal P$, there is a unique $H(Y,A)$ that is orthogonal to
     $Z$ in terms of the test functions $\mathcal M_{H, F}$ under $F$: \[
        \br{H \in \mathcal H : \E_{F}[m(H(Y, A)) \mid Z] = 0, \text{ for all $m \in
        \mathcal M_{H,F}$}} = \br{H_ {F}}. \numberthis \label{eq:orthogonality}
    \] 
    \item For all $F \in \mathcal     F$, $\tilde Y$ is rationalized by $H_{F}(Y(a),a)
    \overset{\text{a.s.}} {=}0$: $
        \tilde Y_F(a; Y, A) = H_{F}^{-1}(H_{F}(Y, A), a). 
    $
\end{enumerate}

\end{defn}

We first fix a set of {extrapolation rules} $H
\in \mathcal H$. Each $H$ defines a transformed outcome $H(A) = H(Y,A) = H(Y(A), A)$. We
 use the data to find which $H$ corresponds to outcomes $H(a)$ with zero average
 treatment effects---in the sense that the outcome is orthogonal to the instruments
 (with respect to test functions in $\mathcal M$). The condition (1) ensures that these
 orthogonality restrictions pinpoint a unique $H \in \mathcal H$. 

 To elaborate on this last point, first, technically, we  mean ``average treatment
 effects'' as the effect of the instrument on the transformed outcome. Second,
 orthogonality with respect to the instruments is with respect to a class of test
 functions. This allows for encoding different notions of independence, such as mean
 independence or full independence: A simple choice is to have $\mathcal M_ {H, F} =
 \br{\mathrm{id}}$ for all $(H, F)$, which encodes mean independence. We can likewise
 encode full independence by choosing \[
     \mathcal M_{H, F} = \br{
         h\mapsto \pr{\one(h \in B) - \E_{F}[\one(H(Y, A) \in B)]} : \text{Measurable
         sets
         $B$}
     }.\numberthis \label{eq:indep_M}
 \]

 Upon finding a unique $H \in \mathcal H$ that yields transformed outcomes that have zero
 average effects \eqref{eq:orthogonality}, we predict outcomes by acting as if $a \mapsto
 H(Y(a), a)$ is zero almost surely, rather than just on average. If $\tilde Y$
 corresponds exactly to these predictions, then we say $\tilde Y$ extrapolates with
 respect to $\mathcal H$. 

On the flip side, consider a structural model $\mathcal P^*$ that generates $\mathcal P$.
As pointed out in the main text, if $\mathcal     P^*$ has identified unit-level
counterfactuals, then we can view $\mathcal  P^*$ as parametrized by (i) the distribution
of $(Y(a_0), A, Z)\sim Q$ and (ii) the map $C_0 \in \mathcal C$. The following definition
describes structural models in which the exogeneity of $Z$ and functional forms in the
outcome $C_0 \in \mathcal C$ are the only restrictions. 

 \begin{defn} 
\label{defn:complete} We say a structural model $\mathcal P^*$ that satisfies
 counterfactual homogeneity and generates $\mathcal P$ is \emph{only restricted
 exogeneity and $\mathcal C$} if every $Q, C_0$ in the following set indexes some member
 of $\mathcal P^*$: \[
\br{Q : Q \overset{d}{=} (C_0(Y,A), A, Z),\, (Y,A,Z) \sim F,\, C_0 \in \mathcal C,\, C_0
(Y,A) \indep_F Z} \times
\mathcal C.
 \]

 \end{defn}

 \section{Additional results}

 \subsection{Counterfactuals in prices}
 \label{sub:counterfactuals_in_prices}

To consider misspecification, we let $\mathcal  P^*_{\text{model}}$ be some set of
distributions of potential outcomes that satisfy \cref{as:homogeneity}. Suppose that the
model is misspecified for the true distributions of potential outcomes:
$F^* \not \in \mathcal  P^*_{\text{model}}$. Following \citet{andrews2023structural},  we
say that $\mathcal P^*_{\text{model}}$ is causally correctly specified for price at $F^*$
if there is some member $\tilde F^* \in \mathcal  P^*_{\text{model}} $ that generates
correct counterfactuals in prices.

\begin{defn}
    We say that $\mathcal P^*_{\text{model}}$ is causally correctly specified for price at
    $F^*$ if there is some $\tilde F^* \in \mathcal  P^*_{\text{model}} $ for which \[
        \P_{F^*}\br{Y(x_1, p', x_2) = h^{-1}(h(Y(a), p, x_2), p', x_2)} = 1 \text{ for all
        } a =
        (x_1,
        p, x_2), (x_1,
        p', x_2) \in \mathcal A
    \] where $h(y,p, x_2) = h_{\tilde F^*}(y, p, x_2)$.
\end{defn}

We show that this notion is exactly equivalent to $F^*$ satisfying a type of
counterfactual homogeneity in prices. 

\begin{as}[Counterfactual homogeneity in prices]
\label{as:cfh_price}
    For each $(x_1,x_2)$, fix some baseline price $p_0$ where $(x_1, p_0, x_2) \in
    \mathcal A$, where $p_0$ may depend on $(x_1, x_2)$. There exists some $C_0 (Y (a), p,
     x_2)$, invertible in its first argument, such that \[
        \P_{F^*} \br{Y(x_1, p_0, x_2)  = C_0(Y(a), p, x_2; p_0)} = 1
    \]
    for all $a = (x_1, p, x_2) \in \mathcal A$. 
\end{as}

\Cref{as:cfh_price} is the analogue of \cref{as:latent_homogeneity}, except that we are
 only transporting potential outcomes along prices, from $(x_1, p, x_2)$ to $(x_1, p_0,
 x_2)$. \Cref {as:cfh_price} implies that counterfactuals in prices are homogeneous, in
 the sense that \[
    \var_{F^*} (Y(x_1, p', x_2) \mid Y(x_1, p, x_2)) = 0
\] for all $(x_1, p', x_2), (x_1, p, x_2) \in \mathcal A$. Since \cref{as:cfh_price} makes
 no restrictions across $(x_1, x_2)$ values, the distribution of $Y(x_1', p, x_2') \mid Y
 (x_1, p, x_2)$ is not similarly restricted. Thus, relative to 
 \cref{as:latent_homogeneity}, \cref{as:cfh_price} allows for counterfactual heterogeneity
  in characteristics. \Cref{as:cfh_price} also additionally imposes that the map $C_0$
  does not depend on $x_1$ except through $p_0$, which is a side effect of imposing 
  \cref{as:latent_partial_linearity} in $\mathcal P^*_{\text{model}}$.

\begin{prop}
    $\mathcal P^*_{\text{model}}$ is causally correctly specified for price at
    $F^*$ if and only if $F^*$ satisfies \cref{as:cfh_price} and $\mathcal P^*_{
    \text{model}}$ is sufficiently rich to include $C_0$: \[
        C_0(y, p, x_2; p_0) = h_{\tilde F^*}^{-1}(h_{\tilde F^*}(y, p, x_2), x_1, p_0,
        x_2)
    \]
    for some $\tilde F^* \in \mathcal P_{\text{model}}^*$ for all $a = (x_1, p, x_2) \in
    \mathcal A$.
\end{prop}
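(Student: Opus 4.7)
The plan is to prove the two directions by unpacking the definition of causal correctness for price, specializing to the baseline $p_0$ to extract $C_0$, and then using invertibility of $h_{\tilde F^*}$ to go back.

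For the forward direction ($\Rightarrow$), I will take the $\tilde F^*\in\mathcal P^*_{\text{model}}$ witnessing causal correctness and simply set $p' = p_0$ in the defining identity
\[
Y(x_1,p',x_2) = h_{\tilde F^*}^{-1}\bigl(h_{\tilde F^*}(Y(a),p,x_2),p',x_2\bigr).
\]
This immediately produces a candidate $C_0(y,p,x_2;p_0) = h_{\tilde F^*}^{-1}(h_{\tilde F^*}(y,p,x_2),p_0,x_2)$, which is invertible in $y$ (as a composition of invertible maps), and by assumption equals $Y(x_1,p_0,x_2)$ $F^*$-almost surely for every $(x_1,p,x_2)\in\mathcal A$. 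This verifies \cref{as:cfh_price} and exhibits $\mathcal P^*_{\text{model}}$ as rich enough to contain $C_0$.

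For the backward direction ($\Leftarrow$), I will fix $\tilde F^*\in\mathcal P^*_{\text{model}}$ witnessing the stated form of $C_0$, and then apply \cref{as:cfh_price} twice: once to transport $Y(a)$ to the baseline at $p_0$, and once to transport $Y(x_1,p',x_2)$ to the same baseline. Both give $F^*$-almost sure expressions for $Y(x_1,p_0,x_2)$, namely
\[
h_{\tilde F^*}^{-1}\bigl(h_{\tilde F^*}(Y(a),p,x_2),p_0,x_2\bigr) = Y(x_1,p_0,x_2) = h_{\tilde F^*}^{-1}\bigl(h_{\tilde F^*}(Y(x_1,p',x_2),p',x_2),p_0,x_2\bigr).
\]
Since $h_{\tilde F^*}^{-1}(\cdot,p_0,x_2)$ is injective, I can cancel it from both sides, then apply $h_{\tilde F^*}^{-1}(\cdot,p',x_2)$ to recover $Y(x_1,p',x_2) = h_{\tilde F^*}^{-1}(h_{\tilde F^*}(Y(a),p,x_2),p',x_2)$ as required.

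I do not expect any real obstacle here: the argument is essentially a diagram chase through the invertibility of $h$ and the definition of $C_0$. The only place to be careful is ensuring that the null sets from the two applications of \cref{as:cfh_price} (one for $(x_1,p,x_2)$ and one for $(x_1,p',x_2)$) can be unioned, which holds because $\mathcal A$ is taken to be fixed and the almost-sure statements are pointwise in $a$.
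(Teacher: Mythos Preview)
Your proof is correct and follows essentially the same route as the paper's: specialize $p'=p_0$ to extract $C_0$ in the forward direction, and in the reverse direction transport both $Y(a)$ and $Y(x_1,p',x_2)$ to the common baseline $Y(x_1,p_0,x_2)$ and cancel via invertibility of $h_{\tilde F^*}$. The paper compresses the backward step into the one-line identity $Y(x_1,p',x_2)=C_0^{-1}(C_0(Y(a),p,x_2;p_0),p',x_2;p_0)$, which is exactly your two-application argument in different notation.
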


\begin{proof}

``If'' direction: Assume $\mathcal P^*_{\text{model}}$ is causally correctly specified for
price. Then for some $h=h_{\tilde F^*}$ and $\tilde F^* \in \mathcal P^*_{\text
{model}}$,\[
    \P_{F^*}\br{
        Y(x_1, p_0, x_2) = h^{-1}(h(Y(a), p, x_2), p_0, x_2)
    } = 1
\]
We can thus take \[
    C_0(y, p, x_2; p_0) = h^{-1}(h(y, p, x_2), p_0, x_2).
\]

``Only if'' direction: On the other hand, if \[
    C_0(y,p,x_2; p_0) = h^{-1}(h(y, p, x_2), p_0, x_2)
\]
for some $h = h_{\tilde F^*}$. Then for $a = (x_1, p, x_2)$, \[
    Y(x_1, p', x_2) = C_0^{-1} (C_0 (Y(a), p, x_2; p_0), p', x_2 ; p_0) = h^
    {-1}\pr{h(Y(a), p, x_2), p', x_2}.
\]
This proves that $\mathcal P^*_{\text{model}}$ is causally correctly specified. 
\end{proof}

 \subsection{Parametric example for micro-data}

  \begin{exsq}[Micro BLP]
  \label{ex:micro-blp}
A parametric mixed logit model posits 
 \citep{microblp}, \[Y_j (a)
 [w] = \bhs_j(w, a, \xi) \equiv \int
  \frac{e^{a_j'\beta + \xi_j}}{1 + \sum_{k=1}^J e^{a_k'\beta + \xi_j}} d F(\beta \mid w).
  \numberthis 
  \label{eq:mixed_logit_micro}
 \] for some parametrized distribution $F(\beta \mid w)$. One popular choice 
 \citep{conlon2025incorporating} sets $F(\beta \mid w) \sim \Norm(\Pi w, \Sigma) $,
  parametrized by coefficient matrix $\Pi$ and variance-covariance matrix $\Sigma$. 
    
  \Cref{as:invertible-micro,as:index-micro,as:invertible-index-micro}  generalize versions
   of \eqref{eq:mixed_logit_micro} \citep{berry2024nonparametric}. Suppose $a_j$ includes
   price, and prices do not have demographic-varying coefficients. There is also a
   product fixed effect $\nu_j$ that does have demographic-varying coefficients.\footnote{These
 assumptions are restrictive. However, one could accommodate other characteristics $\tilde
 x$ and price coefficients that vary by additional demographics $\tilde w$. In such a
 model, this analysis studies the dependence on $w$ in $w \mapsto Y(a, \tilde x)[w, \tilde
 w]$ in the subpopulation that holds $\tilde x$ fixed. See Section 3 and footnote 24 in
 \citet{berry2024nonparametric}.} Moreover, suppose the random coefficient distribution is
 Gaussian. Then,
 \eqref{eq:mixed_logit_micro} is---using $\tilde \xi$ to denote the demand shock---\[ Y
 (a) [w] = \int \frac{e^ {\pi_j'w + \nu_j - \alpha a_j
 + \tilde\xi_j }}{1 + \sum_{k=1}^J e^{ \pi_k'w + \nu_k - \alpha a_k + \tilde \xi_k }}
 p_{\Norm(0,\Sigma)}(\nu) \,d\nu \equiv \sigma\pr{\Pi w + \tilde \xi, x}\, \Pi \equiv 
 \colvecb{3}
 {\pi_1'}{\vdots}{\pi_J'} \in \R^{J\times J}.
 \]
 $\sigma(\cdot, a)$ is injective because mixed logit market shares are invertible. If the
 matrix $\Pi$ is full rank, then we can reparametrize \[ g(w)= w - w_0 \quad \xi =
 \Pi^{-1} \tilde \xi,
 \]
 and absorb $\Pi, w_0$ into the demand function $\sigma$.
\end{exsq}

\end{document}